\newtheorem{theorem}{Theorem}
\newtheorem{example}{Example}
\newtheorem{proposition}{Proposition}
\newtheorem{lemma}{Lemma}
\newtheorem{corollary}{Corollary}
\newtheorem{remark}{Remark}
\theoremstyle{definition}
\def\BibTeX{{\rm B\kern-.05em{\sc i\kern-.025em b}\kern-.08em
    T\kern-.1667em\lower.7ex\hbox{E}\kern-.125emX}}
\begin{document}

\title{Privacy-Utility Trade-offs Under Multi-Level Point-Wise Leakage Constraints
}

\author{\IEEEauthorblockN{1\textsuperscript{st} Amirreza Zamani}
\IEEEauthorblockA{\textit{Department of Information Science} \\
\textit{and Engineering, KTH}\\
Stockholm, Sweden \\
amizam@kth.se}
\and
\IEEEauthorblockN{2\textsuperscript{nd} Parastoo Sadeghi}
\IEEEauthorblockA{\textit{School of Engineering and Technology,} \\
\textit{ UNSW}\\
Canberra, Australia \\
p.sadeghi@unsw.edu.au}
\and
\IEEEauthorblockN{3\textsuperscript{rd} Mikael Skoglund}
\IEEEauthorblockA{\textit{Department of Information Science} \\
\textit{and Engineering, KTH}\\
Stockholm, Sweden \\
skoglund@kth.se}}
\maketitle
\begin{abstract}
An information-theoretic privacy mechanism design is studied, where an agent observes useful data $Y$  which is correlated with the private data $X$. The agent wants to reveal the information to a user, hence, the agent utilizes a privacy mechanism to produce disclosed data $U$ that can be revealed. We assume that the agent has no direct access to $X$, i.e., the private data is hidden. We study privacy mechanism design that maximizes the disclosed information about $Y$, measured by the mutual information between $Y$ and $U$, while satisfying a point-wise constraint with different privacy leakage budgets. We introduce a new measure, called the \emph{multi-level point-wise leakage}, which allows us to impose different leakage levels for different realizations of $U$. In contrast to previous studies on point-wise measures, which use the same leakage level for each realization, we consider a more general scenario in which each data point can leak information up to a different threshold. 
As a result, this concept also covers cases in which some data points should not leak any information about the private data, i.e., they must satisfy perfect privacy. In other words, a combination of perfect privacy and non-zero leakage can be considered.

	When the leakage is sufficiently small, concepts from information geometry allow us to locally approximate the mutual information. We show that when the leakage matrix $P_{X|Y}$ is invertible, utilizing this approximation leads to a quadratic optimization problem that has closed-form solution under some constraints. In particular, we show that it is sufficient to consider only binary $U$ to attain the optimal utility. 
    This leads to simple privacy designs with low complexity which are based on finding the maximum singular value and singular vector of a matrix. 
    We then extend our approach to a general leakage matrix and show that the main complex privacy-utility trade-off problem can be approximated by a linear program. Finally, we discuss how to extend the permissible range of leakage.  
\end{abstract}

\begin{IEEEkeywords}
point-wise measures, multi-level leakage, information geometry, local approximation.
\end{IEEEkeywords}

\section{Introduction}
In this paper, as shown in Fig.~\ref{sys1}, an agent wants to disclose some useful information to a user. We show the useful data by a random variable (RV) $Y$, which is arbitrarily correlated with the private data denoted by RV $X$. Furthermore, RV $U$ denotes the disclosed data revealed by the agent. The agent’s goal is to design $U$ based on $Y$ so as to disclose as much information as possible about $Y$ while satisfying a point-wise privacy criterion.
 We use mutual information to measure utility and a multi-level point-wise constraint to measure privacy leakage. The multi-level point-wise leakage measure generalizes previous point-wise measures in the sense that each letter (realization) of the disclosed data $U$ is subject to a different privacy threshold. This framework also includes perfect privacy, i.e., some letters must satisfy a zero-leakage constraint. To motivate our model, we consider scenarios in which the letters of $U$ have different priorities. For instance, some letters may be more sensitive than others. As an example, in clinical studies, if a patient does not have a sensitive disease, releasing this information may not be harmful; however, if a patient does have such a disease, revealing this information can be a privacy breach. Therefore, it makes sense to require a stricter privacy condition on more sensitive outcomes of $U$ than the less sensitive outcomes of $U$.  
 

\begin{figure}[]
	\centering
	\includegraphics[width = 0.5\textwidth]{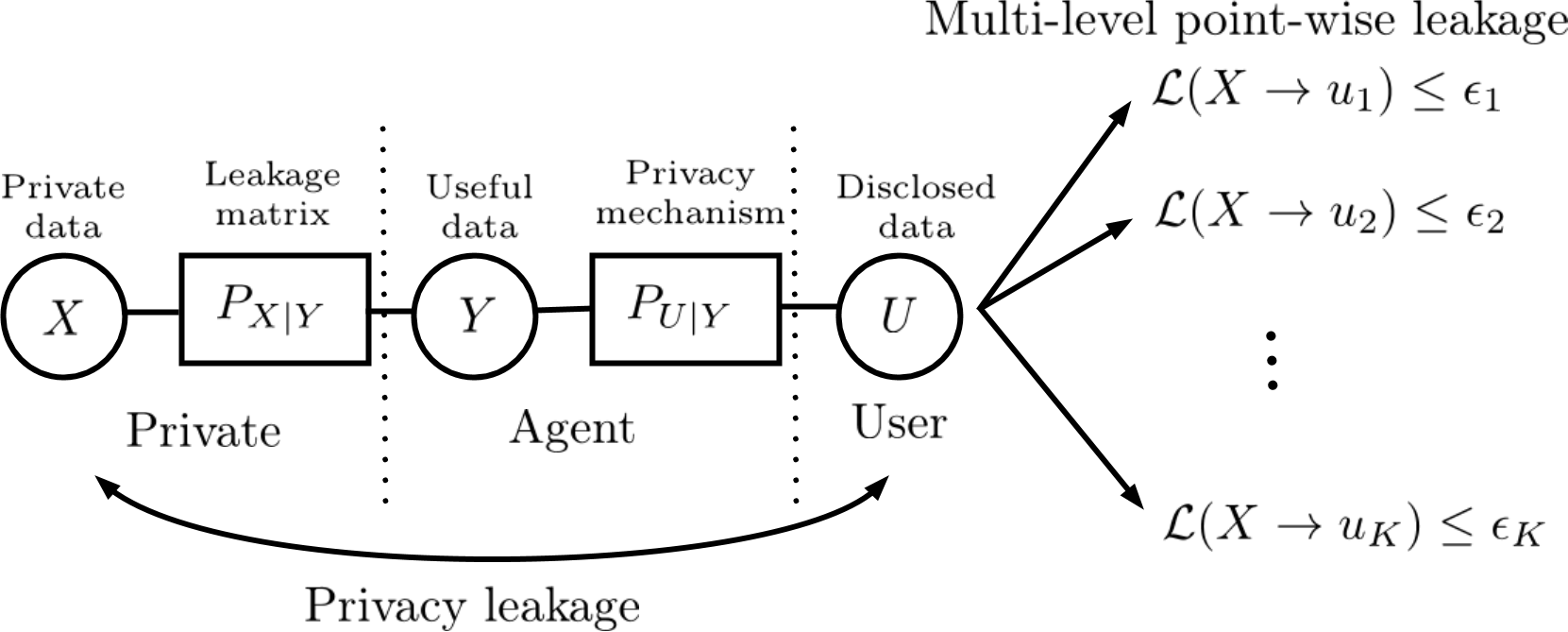}
	\caption{In this model, disclosed data $U$ is designed by a privacy mechanism that maximizes the information disclosed about $Y$ and satisfies the multi-level point-wise privacy constraint. Here, each realization of $U$ must satisfy a privacy constraint with different leakage budget. Furthermore, $\mathcal{L}(X \rightarrow u_i)$ denotes the leakage from the private data $X$ to the letter (realization) $u_i \in \mathcal{U}$.}
	\label{sys1}
\end{figure}
\subsection{Related works}
Related works on the information-theoretic privacy mechanism design can be found in \cite{borz,koala,shah,khodam,Khodam22,zarab1,zarab2,seif,duchi,kairouz,evfimievski,kairouz2015composition,barthe2013beyond,feldman2018privacy,makhdoumi, Total, Calmon1,oof,razegh,emma,ang,sep,lopuha,long,11195280}. 
In \cite{borz}, the problem of privacy-utility trade-off considering mutual information both as measures of privacy and utility is studied. Under perfect privacy assumption, it has been shown that the privacy mechanism design problem can be reduced to linear programming. Here, perfect privacy corresponds to the complete independence between $U$ and $X$, i.e., leakage from $X$ to each letter of $U$ must be zero. In \cite{koala}, \emph{secrecy by design} problem is studied under the perfect secrecy assumption. Bounds on secure decomposition have been derived using the Functional Representation Lemma. 
In \cite{shah}, the privacy problems considered in \cite{koala} are generalized by relaxing the perfect secrecy constraint and allowing some leakage. Furthermore, the bounds in \cite{shah} have been tightened in \cite{sep} by using a \emph{separation technique}.

As discussed in \cite{khodam} and \cite{Khodam22}, it may not be desirable to use average measures to quantify privacy leakage, since some data points (realizations) may leak more information than a prescribed threshold. In other words, if an adversary has access to such letters, it can infer a considerable amount of information about the sensitive data $X$. For example, average $\chi^2$ and $\ell_1$ measures were used in \cite{Calmon2} and \cite{Total}, respectively, where ``average'' corresponds to taking an average over the letters of $U$.
 On the other hand, these notions were strengthened by introducing point-wise (strong) $\chi^2$ and $\ell_1$ privacy measures in \cite{khodam} and \cite{Khodam22}, respectively. It is shown that by using concepts from information geometry, the main complex design problem can be approximated by linear algebraic techniques.
The concept of \emph{lift}, which corresponds to another point-wise measure, is studied in \cite{zarab2} and represents the likelihood ratio between the posterior and prior beliefs concerning sensitive features within a dataset.

Concepts from information geometry have been used in various network information theory problems, as well as in privacy, secrecy, and fairness design problems, to approximate complex optimization problems and derive simple designs \cite{Shashi,huang,khodam,Khodam22,shah,razegh,emma,ang,long,zamani2025fair,zamani2025fair2}. The main challenge in these problems arises from the lack of a geometric structure in the space of probability distributions. If we assume that the distributions of interest are close, KL-divergence as well as mutual information can be approximated by a weighted squared Euclidean distance. This leads to a method that allows us to approximate various design problems of interest.

Particularly, this approach has been used in \cite{Shashi, huang}, considering point-to-point channels and some specific broadcast channels. This has been used in the privacy context in \cite{khodam,Khodam22,shah,razegh,emma,long,ang}, where in \cite{ang}, mutual information is used as the privacy measure and relative entropy as the utility measure, both of which are approximated by quadratic functions. In \cite{long}, the privacy mechanisms are designed considering LIP and max-lift as the privacy leakage measures. Furthermore, it has been used in \cite{zamani2025fair2,zamani2025fair} to design fair mechanisms under bounded demographic parity and equalized odds constraints.
Furthermore, in \cite{emma}, using information geometry leads to a local approximation of the secrecy capacity over a wire-tap channel. Lastly, a similar approach has been used in \cite{ang}, where a hypothesis testing problem is considered under a privacy leakage constraint defined by bounded mutual information. In high privacy regimes, the problem is approximated.

\subsection{Contributions}
We emphasize that none of the aforementioned works study multi-level point-wise privacy constraints; that is, in all prior work, $\mathcal{L}(X \to u)$ is bounded by a single quantity $\epsilon$. As motivated above, this represents an important gap in the existing literature.
In the present work, we divide our results into two main parts as follows.
\subsubsection{Invertible leakage matrix} In the first part of the results, we assume that the leakage matrix $P_{X|Y}$ is invertible and employ a multi-level point-wise $\chi^2$-privacy constraint that extends the measure in \cite{khodam} by allowing different leakage thresholds for different realizations $u \in \mathcal{U}$. By using concepts from information geometry, we approximate the main privacy–utility trade-off problem and convert it into a simpler linear algebra problem. When $U$ is binary, the optimal solution to the approximate problem is derived in closed form.
 Furthermore, in the general case, we show that the approximate problem reduces to the case of binary $U$. In other words, for any alphabet size of $U$, we can, without loss of optimality, restrict $U$ to be binary. This implies that, in the optimal solution, only the two alphabet symbols with the highest leakage thresholds are correlated with $X$, while all other symbols do not leak any information about $X$.
\subsubsection{General leakage matrix} In the second part of the results, we first assume that the leakage matrix $P_{X|Y}$ has full row rank.
 Furthermore, we use a multi-level point-wise $\ell_1$-privacy constraint that extends \cite{Khodam22} by allowing different leakage budgets for different realizations $u \in \mathcal{U}$. 
 We show that the optimizers of the main problem lie at the extreme points of certain convex polytopes. We then use a Taylor expansion to approximate the main problem evaluated at these extreme points. Finally, we show that the resulting problem reduces to a linear program. 
 We also discuss how to generalize the framework to arbitrary leakage matrices.
\section{System model and Problem Formulation} \label{system}
Let $P_{XY}$ denote the joint distribution of discrete random variables $X$ and $Y$ defined on finite alphabets $\cal{X}$ and $\cal{Y}$. We represent $P_{XY}$ by a matrix defined on $\mathbb{R}^{|\mathcal{X}|\times|\mathcal{Y}|}$ and
marginal distributions of $X$ and $Y$ by vectors $P_X$ and $P_Y$ defined on $\mathbb{R}^{|\mathcal{X}|}$ and $\mathbb{R}^{|\mathcal{Y}|}$ given by the row and column sums of $P_{XY}$, respectively. Furthermore, 
we represent the leakage matrix $P_{X|Y}$ by a matrix defined on $\mathbb{R}^{|\mathcal{X}|\times|\mathcal{Y}|}$. Furthermore, for a given $u\in \mathcal{U}$, $P_{X,U}(\cdot,u)$ and $P_{X|U}(\cdot|u)$ defined on $\mathbb{R}^{|\mathcal{X}|}$ are distribution vectors with elements $P_{X,U}(x,u)$ and $P_{X|U}(x|u)$ for all $x\in\cal X$. 
The relation between $U$ and $Y$ is described by the kernel $P_{U|Y}$ defined on $\mathbb{R}^{|\mathcal{U}|\times|\mathcal{Y}|}$. 
We assume that the cardinality of $U$ is fixed and we have $|\mathcal{U}|=K$.
In this work, $P_{X}(x)$, $P_{X}$, $\sqrt{P_{X}}$ and $[P_{X}]$ denote $P_{X}(X=x)$, distribution vector of $X$, a vector with entries $\sqrt{P_X(x)}$, and a diagonal matrix with diagonal entries equal to $P_{X}(x)$, respectively. For two vectors $P$ and $Q$ with same size, we say $P\leq Q$ if $P(x)\leq Q(x)$ for all $x$. 
Furthermore, we assume that the private data is not directly accessible by the agent, resulting in the Markov chain $X - Y - U$.
Our goal is to design the privacy mechanism that produces the disclosed data $U$, which maximizes the utility and satisfies a multi-level point-wise privacy criterion.
In this work, utility is measured by the mutual information $I(U;Y)$, while privacy leakage is quantified by $\mathcal{L}(X \rightarrow u_i)\triangleq D_f(P_{X|U=u_i};P_X)\leq \epsilon_i$, where $u_i \in \mathcal{U}$ for all $i \in \{1,\ldots,K\}$, and $D_f(P;Q)$ denotes a general $f$-divergence between $P$ and $Q$.
 Thus, the privacy problem can be stated as follows 
\begin{subequations}\label{problem}
	\begin{align}
	\sup_{P_{U|Y}} \ \ &I(U;Y),\label{problem1}\\
	\text{subject to:}\ \ &X-Y-U,\label{Markov1}\\
	 &\mathcal{L}(X \rightarrow u_i)\leq \epsilon_i, \forall i\in \{1,\ldots,K\},\label{local1}
	\end{align}
\end{subequations}
Here, we assume that $K$ and the $\epsilon_i$ values are fixed and without loss of generality we have
$
\epsilon_1 \geq \epsilon_2 \geq \cdots \geq \epsilon_K \geq 0 .
$
\begin{remark}
\normalfont
We refer to \eqref{local1} as the \emph{multi-level point-wise privacy constraint}, since each $u_i$ must satisfy a leakage constraint with a different threshold. Furthermore, in this work, we assume that the alphabets of $U$ are known and we design the filter $P_{U|Y}$. This model also covers hybrid cases in which some of the $\epsilon_i$ values are zero. 
Letting $\epsilon_1=\ldots=\epsilon_K$ and $D_f(\cdot||\cdot)=\chi^2(\cdot||\cdot)$ in \eqref{problem}, leads to the problem studied in \cite{khodam}.
Finally, setting $\epsilon_1 = 0$ reduces the model to the perfect privacy problem studied in \cite{borz}.
\end{remark}
\begin{example} (Motivating example)
    Consider a medical dataset in which the private data $X$ represents a patient’s exact diagnosis, the useful data $Y$ contains clinical measurements, and the disclosed data $U$ is a categorical report released to a third party. The alphabet of $U$ may include outcomes such as
\[
u \in \{\text{``healthy''},\ \text{``low risk''},\ \text{``high risk''},\ \text{``critical''}\}.
\]
Revealing $U=\text{``healthy''}$ or $U=\text{``low risk''}$ may be allowed to leak more information about $X$, since these outcomes are less sensitive. In contrast, $U=\text{``high risk''}$ or $U=\text{``critical''}$ must satisfy much stricter leakage constraints due to ethical and legal considerations. As a result, different symbols of $U$ are subject to different privacy leakage thresholds, naturally leading to a multi-level point-wise leakage formulation.
\end{example}
\subsection{Invertible $P_{X|Y}$:} Here, we assume that $|\mathcal{X}|=|\mathcal{Y}|=N$ and $P_{X|Y}\in\mathbb{R}^{N\times N}$ is invertible. Furthermore, we choose the $\chi^2$-distance as the $f$-divergence in \eqref{problem}, and the main problem is formulated as follows.
\begin{align}
    g_{\epsilon_1,\ldots,\epsilon_K}^{\chi}(P_{XY})&\triangleq\sup_{\begin{array}{c} 
	\substack{P_{U|Y}: X-Y-U,\\\chi^2(P_{X|u_i}||P_X)\leq\epsilon_i,\ \forall i\in\{1,\ldots,K\},}
	\end{array}}I(Y;U),\label{main1}
\end{align}
\begin{remark}
\normalfont
    Letting $\epsilon_1=\ldots=\epsilon_K$ in \eqref{main1}, leads to the problem studied in \cite{khodam}. Furthermore, we have
    \begin{align*}
        g_{\epsilon_K,\ldots,\epsilon_K}^{\chi}(P_{XY})&\leq g_{\epsilon_1,\ldots,\epsilon_K}^{\chi}(P_{XY})\\&\leq
        g_{\epsilon_1,\ldots,\epsilon_1}^{\chi}(P_{XY}),
    \end{align*}
    where the upper and lower bounds are studied in \cite{khodam}.
\end{remark}
\subsection{General $P_{X|Y}$:} Here, we first assume that $P_{X|Y}$ has full row rank with $|\mathcal{X}|\leq |\mathcal{Y}|$. Without loss of generality we assume that $P_{X|Y}$ can be represented by two submatrices where the first submatrix is invertible, i.e., $P_{X|Y}=[P_{X|Y_1} , P_{X|Y_2}]$ such that $P_{X|Y_1}$ defined on $\mathbb{R}^{|\mathcal{X}|\times|\mathcal{X}|}$ is invertible. 
 We choose the $\ell_1$-distance as the $f$-divergence in \eqref{problem}, and the main problem is formulated as follows.
\begin{align}
    g_{\epsilon_1,\ldots,\epsilon_K}^{\ell}(P_{XY})&\triangleq\sup_{\begin{array}{c} 
	\substack{P_{U|Y}: X-Y-U,\\\ell_1(P_{X|u_i}||P_X)\leq\epsilon_i,\ \forall i\in\{1,\ldots,K\},}
	\end{array}}I(Y;U),\label{main2}
\end{align}
\begin{remark}\label{r4}
\normalfont
    Letting $\epsilon_1=\ldots=\epsilon_K$ in \eqref{main2}, leads to the problem studied in \cite{Khodam22}. Furthermore, we have
    \begin{align*}
        g_{\epsilon_K,\ldots,\epsilon_K}^{\ell}(P_{XY})&\leq g_{\epsilon_1,\ldots,\epsilon_K}^{\ell}(P_{XY})\\&\leq
        g_{\epsilon_1,\ldots,\epsilon_1}^{\ell}(P_{XY}),
    \end{align*}
    where the upper and lower bounds are studied in \cite{Khodam22}.
\end{remark}
Finally, in this paper, we also discuss how to generalize our framework to an arbitrary leakage matrix.
\section{Main Results}
In this section, we approximate \eqref{main1} and \eqref{main2} and solve the resulting optimization problems. Before stating the main results, we rewrite the conditional distribution $P_{X|U=u_i}$ as a perturbation of $P_X$. Thus, for any $u_i\in\mathcal{U}$, we can write $P_{X|U=u_i}=P_X+\epsilon_{i}\cdot J_{u_i}$, where $J_{u_i}\in\mathbb{R}^{|\mathcal{X}|}$ is a perturbation vector which satisfies following properties.
\begin{align}
\sum_{x\in\mathcal{X}} J_{u_i}(x)&=0,\ \forall u_i,\label{proper1}\\
\sum_{i=1}^{K} \epsilon_{i}P_U(u_i)J_{u_i}(x)&=0,\ \forall x\label{proper2}.
\end{align}
Furthermore, $\forall u_i\in\{u_1,\ldots,u_K\}$, the constraints $\chi^2(P_{X|u_i}||P_X)\leq\epsilon_i$ and $\ell_1(P_{X|u_i}||P_X)\leq\epsilon_i$ can be rewritten as  
\begin{align}
    \|[\sqrt{P_X}^{-1}]J_{u_i}\|_2=\|L_{u_i}\|_2\leq 1,\ \forall u_i\in\{u_1,\ldots,u_K\},
\end{align}
and 
\begin{align}\label{proper3}
    \|J_{u_i}\|_1\leq 1, \ \forall u_i\in\{u_1,\ldots,u_K\},
\end{align}
respectively, where $L_{u_i}\triangleq[\sqrt{P_X}^{-1}]J_{u_i}$ and $u_i$ is the $i$-th alphabet of $U$.
\subsection{Invertible matrix case} 
In the next result, we approximate \eqref{main1} by a quadratic function. 
To do so, let us define $W\triangleq [\sqrt{P_Y}^{-1}]P_{X|Y}^{-1}[\sqrt {P_X}]$.
In all results, the constraint $\sum_i P_U(u_i) = 1$ is used but not written, for brevity of presentation. 
\begin{proposition}\label{prop1}
    For sufficiently small $\epsilon_1$, \eqref{main1} can be approximated by the following problem
    \begin{align}\label{tt}
        \max_{\begin{array}{c} 
		\substack{L_{u_i},P_U: \sum_i \epsilon_{i}P_U(u_i)L_{u_i}=0,\\ \|L_{u_i}\|_2\leq 1,\ L_{u_i}\perp \sqrt{P_X}, \forall i,}
		\end{array}} \!\!\!\!\!\!\!0.5\!\left(\sum_{i=1}^K \!P_U(u_i)\epsilon_{i}^2\|WL_{u_i}\|^2 \right)\!.
    \end{align}
\end{proposition}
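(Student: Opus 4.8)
The plan is to carry out a local, information-geometric expansion of the utility. First I would write $I(Y;U) = \sum_{i=1}^K P_U(u_i)\, D\!\left(P_{Y|U=u_i}\,\|\,P_Y\right)$ and express each conditional $P_{Y|U=u_i}$ as a small perturbation of $P_Y$. Since $X-Y-U$, we have $P_{X|U=u_i} = P_{X|Y}\,P_{Y|U=u_i}$; and because the columns of $P_{X|Y}$ sum to one, $\mathbf 1^\top P_{X|Y}^{-1} = \mathbf 1^\top$ and $P_{X|Y}^{-1}P_X = P_Y$. Combining this with $P_{X|U=u_i} = P_X + \epsilon_i J_{u_i}$ gives $P_{Y|U=u_i} = P_Y + \epsilon_i\, P_{X|Y}^{-1}J_{u_i}$.

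Next I would show that this perturbation is uniformly of order $\epsilon_1$ over the feasible set. The $\chi^2$-constraint is equivalent to $\|L_{u_i}\|_2 \le 1$, hence $\|J_{u_i}\|_2 = \|[\sqrt{P_X}]L_{u_i}\|_2 \le 1$, so $\|P_{X|Y}^{-1}J_{u_i}\|_2$ is bounded by the (fixed) operator norm of $P_{X|Y}^{-1}$; together with $\epsilon_i \le \epsilon_1$ this yields $\|P_{Y|U=u_i} - P_Y\|_2 = O(\epsilon_1)$ uniformly in $i$ and in the feasible choice of $(L_{u_i},P_U)$. In particular, for $\epsilon_1$ small enough each $P_{Y|U=u_i}$ is a valid distribution, using full support of $P_X$ and $P_Y$ (implicit in the definitions of $L_{u_i}$ and $W$).

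Then I would Taylor-expand the KL divergence about $P_Y$. With $v := \epsilon_i P_{X|Y}^{-1}J_{u_i}$ we get $\mathbf 1^\top v = \epsilon_i\,\mathbf 1^\top J_{u_i} = 0$ by \eqref{proper1}, so the first-order term vanishes and $D(P_Y+v\,\|\,P_Y) = \tfrac12\sum_y v(y)^2/P_Y(y) + O(\|v\|_2^3)$, the remainder being uniform by the previous bound and $P_Y>0$. Substituting $J_{u_i} = [\sqrt{P_X}]L_{u_i}$ gives $[\sqrt{P_Y}^{-1}]v = \epsilon_i\,[\sqrt{P_Y}^{-1}]P_{X|Y}^{-1}[\sqrt{P_X}]L_{u_i} = \epsilon_i\,WL_{u_i}$, hence $D(P_{Y|U=u_i}\,\|\,P_Y) = \tfrac12\,\epsilon_i^2\|WL_{u_i}\|_2^2 + O(\epsilon_1^3)$. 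Summing over $i$ then yields $I(Y;U) = 0.5\sum_{i} P_U(u_i)\epsilon_i^2\|WL_{u_i}\|_2^2 + O(\epsilon_1^3)$.

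Finally I would translate the feasibility conditions into the variables $(L_{u_i},P_U)$: \eqref{proper1} becomes $L_{u_i}\perp\sqrt{P_X}$; the consistency/Markov condition \eqref{proper2} becomes $\sum_i \epsilon_i P_U(u_i)L_{u_i}=0$ (multiply through by $[\sqrt{P_X}^{-1}]$); and the privacy budget becomes $\|L_{u_i}\|_2\le 1$. For small $\epsilon_1$ this is, up to a vanishing positivity slack on the $P_{Y|U=u_i}$, a bijective reparametrization of the feasible set of \eqref{main1}, so discarding the uniform $O(\epsilon_1^3)$ term gives exactly \eqref{tt}. I expect the main obstacle to be not any single computation but the justification of this last step: making precise that, after the natural $\epsilon_1^2$ scaling, the optimal value and the maximizers of \eqref{main1} converge to those of \eqref{tt}. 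This relies on the uniformity of the Taylor remainder established above together with compactness of the feasible set in the $(L_{u_i},P_U)$ coordinates.
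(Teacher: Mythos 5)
Your proposal is correct and follows essentially the same route as the paper: the paper's proof simply defers to the second-order Taylor expansion of $D(P_{Y|U=u_i}\|P_Y)$ from \cite{khodam}, with the perturbation $P_{Y|U=u_i}=P_Y+\epsilon_i P_{X|Y}^{-1}J_{u_i}$ carrying the per-letter budget $\epsilon_i$, exactly as you derive. Your write-up merely makes explicit the steps the paper leaves implicit (vanishing first-order term via $\mathbf 1^\top P_{X|Y}^{-1}=\mathbf 1^\top$, the identification of the quadratic form with $\|WL_{u_i}\|_2^2$, and the uniformity of the remainder over the feasible set).
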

\begin{proof}
    The proof is similar to \cite{khodam} and is based on the second order Taylor approximation of the KL-divergence. The only difference is that, due to the multi-level leakage constraints, we have
    \begin{align*}
        P_{X|U=u_i}=P_X+\epsilon_{i}\cdot J_{u_i}, \end{align*}
        which implies
\begin{align*}P_{Y|U=u_i}=P_Y+\epsilon_{i}P_{X|Y}^{-1}\cdot J_{u_i}.
    \end{align*}
    Then, we approximate each term $D(P_{Y|U=u}||P_Y)$ appearing in the mutual information, where $D(\cdot||\cdot)$ denotes the KL-divergence.
\end{proof}
Let $\sigma_{\max}(W)$ and $L^*$ correspond to the maximum singular value and vector of $W$, respectively, where $\|L^*\|_2=1$.
In the next result, we solve \eqref{tt} for binary $U$, i.e., $K=2$. For simplicity of notation, we show $P_U(u_i)$ and $L_{u_i}$ by $P_i$ and $L_i$, respectively.
\begin{proposition}\label{prop2}
    For $K=2$, we have 
    \begin{align}\label{2}
    \eqref{tt}=\frac{1}{2}\epsilon_1\epsilon_2\sigma_{\max}^2(W),
    \end{align}
    attained by
    \begin{align}
        P_{U}(u_1)&=\frac{\epsilon_2}{\epsilon_1+\epsilon_2}, \ P_{U}(u_2)=\frac{\epsilon_1}{\epsilon_1+\epsilon_2},\label{10}\\
        L_{u_1}&=-L_{u_2}=L^*.\label{11}
    \end{align}
\end{proposition}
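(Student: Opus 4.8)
The plan is to use the linear (average) constraint to eliminate one perturbation vector and the weights, turning \eqref{tt} with $K=2$ into a scalar optimization controlled by $\sigma_{\max}(W)$, and then to optimize over $P_U(u_1)$.

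First I would dispose of the degeneracies: if $\epsilon_2=0$ or $P_1P_2=0$, the equality constraint $\epsilon_1P_1L_1+\epsilon_2P_2L_2=0$ together with the objective weights force the value to be $0$, matching the right-hand side of \eqref{2}. Hence assume $\epsilon_1\ge\epsilon_2>0$ and $P_1,P_2\in(0,1)$. The equality constraint then gives $L_2=-\tfrac{\epsilon_1P_1}{\epsilon_2P_2}L_1$, which automatically keeps $L_2\perp\sqrt{P_X}$ and turns the norm bound on $L_2$ into $\tfrac{\epsilon_1P_1}{\epsilon_2P_2}\|L_1\|_2\le1$. Substituting and simplifying (using $P_1+P_2=1$), the objective collapses to
\[
\tfrac12\!\left(P_1\epsilon_1^2\|WL_1\|_2^2+P_2\epsilon_2^2\|WL_2\|_2^2\right)=\tfrac12\,\epsilon_1^2\,\tfrac{P_1}{P_2}\,\|WL_1\|_2^2 .
\]

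Next I would bound this reduced objective. Since $L_1\perp\sqrt{P_X}$ with $\|L_1\|_2\le1$, by the definition of $\sigma_{\max}(W)$ and $L^*$ we have $\|WL_1\|_2^2\le\sigma_{\max}^2(W)\|L_1\|_2^2$; combining with $\|L_1\|_2^2\le\min\{1,(\epsilon_2P_2/(\epsilon_1P_1))^2\}$ coming from the two norm constraints yields
\[
\tfrac12\,\epsilon_1^2\,\tfrac{P_1}{P_2}\,\|WL_1\|_2^2\le\tfrac12\,\epsilon_1\epsilon_2\,\sigma_{\max}^2(W)\,\min\!\left\{\tfrac{\epsilon_1P_1}{\epsilon_2P_2},\tfrac{\epsilon_2P_2}{\epsilon_1P_1}\right\}\le\tfrac12\,\epsilon_1\epsilon_2\,\sigma_{\max}^2(W),
\]
using $\min\{t,1/t\}\le1$. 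This proves ``$\le$'' in \eqref{2}, with equality in the last step only when $\epsilon_1P_1=\epsilon_2P_2$. For achievability I would verify directly that the point \eqref{10}--\eqref{11} is feasible — $\epsilon_1P_1=\epsilon_2P_2$ so the equality constraint holds with $L_1=-L_2=L^*$, and $\|L^*\|_2=1$ meets both norm bounds — and that it attains $\tfrac12\sigma_{\max}^2(W)(\epsilon_1^2P_1+\epsilon_2^2P_2)=\tfrac12\epsilon_1\epsilon_2\sigma_{\max}^2(W)$.

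The point that needs care is not the algebra but the compatibility of the orthogonality constraint with the choice $L_1=L^*$: one checks that $W\sqrt{P_X}=\sqrt{P_Y}$ and $W^{\top}\sqrt{P_Y}=\sqrt{P_X}$ (the latter using that the columns of $P_{X|Y}$ sum to one), so $\sqrt{P_X}$ is itself a right singular vector of $W$ with singular value $1$; consequently the top singular vector $L^*$ can be taken orthogonal to $\sqrt{P_X}$ (and is so whenever the top singular value is simple and exceeds $1$), which is exactly what feasibility of \eqref{tt} demands. I would state this explicitly, since otherwise the claimed optimizer could be infeasible.
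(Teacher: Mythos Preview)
Your proof is correct and follows essentially the same route as the paper: eliminate $L_2$ via the linear constraint, reduce the objective to $\tfrac12\epsilon_1^2\tfrac{P_1}{P_2}\|WL_1\|^2$ under $\|L_1\|\le\min\{1,\epsilon_2P_2/(\epsilon_1P_1)\}$, and bound by $\sigma_{\max}^2(W)$. Your $\min\{t,1/t\}\le1$ step is a slightly cleaner packaging of what the paper does by splitting into the two cases $\epsilon_1P_1\lessgtr\epsilon_2P_2$, and your explicit remark that $\sqrt{P_X}$ is itself a singular vector of $W$ (so $L^*$ may be chosen in its orthogonal complement) is a point the paper leaves implicit by reference to \cite{khodam}.
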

\begin{proof}
    Clearly, \eqref{10} and \eqref{11} satisfy the constraints in \eqref{tt} leading to a feasible candidate. Furthermore, we have
\begin{align*}
&\frac{1}{2}\left(\sum_{u_i\in\{u_1,u_2\}} \epsilon_i^2P_U(u_i)\|WL_{u_i}\|^2\right)=\\&\frac{1}{2}\sigma_{\text{max}}^2(W)\left( \frac{\epsilon_1^2\epsilon_2}{\epsilon_1+\epsilon_2}+ \frac{\epsilon_2^2\epsilon_1}{\epsilon_1+\epsilon_2}\right)=\frac{1}{2}\epsilon_1\epsilon_2\sigma_{\text{max}}^2(W).
\end{align*}
Thus, \eqref{2} is a lower bound to \eqref{tt}. Next, we show that it is also an upper bound. Using the constraint $P_1\epsilon_1L_1+P_2\epsilon_2L_2=0$, we get $L_2=-\frac{P_1\epsilon_1}{P_2\epsilon_2}L_1$. Substituting $L_2$ in \eqref{tt}, we can rewrite the optimization problem as follows.
\begin{align}
        0.5\!\!\!\!\!\!\max_{\begin{array}{c} 
		\substack{L_1,P_1,P_2: \|L_{1}\|_2\leq \min\{1,\frac{P_2\epsilon_2}{P_1\epsilon_1}\},\\ L_{1}\perp \sqrt{P_X},\ P_1+P_2=1,}
		\end{array}}  \!\!\!\!\!\!\!\!P_1\epsilon_1^2\|WL_{1}\|^2\!+\!\frac{P_1^2\epsilon_1^2}{P_2}\|WL_{1}\|^2\nonumber\\
        =
        0.5\epsilon_1^2\!\!\!\!\!\!\max_{\begin{array}{c} 
		\substack{L_1,P_1,P_2: \|L_{1}\|_2\leq \min\{1,\frac{P_2\epsilon_2}{P_1\epsilon_1}\},\\ L_{1}\perp \sqrt{P_X},\ P_1+P_2=1,}
		\end{array}}  \!\!\!\!\frac{P_1}{P_2}\|WL_{1}\|^2\label{opt2}
    \end{align}
    where the constraint $\|L_{1}\|_2\leq \frac{P_2\epsilon_2}{P_1\epsilon_1}$ follows by $\|L_{2}\|_2\leq 1$ and $L_2=-\frac{P_1\epsilon_1}{P_2\epsilon_2}L_1$. We consider two cases: $\frac{P_2\epsilon_2}{P_1\epsilon_1}\geq 1 \leftrightarrow P_1\leq \frac{\epsilon_2}{\epsilon_1+\epsilon_2}$ or $\frac{P_2\epsilon_2}{P_1\epsilon_1}\leq 1 \leftrightarrow P_2\leq \frac{\epsilon_1}{\epsilon_1+\epsilon_2}$. In the first case, \eqref{opt2} is upper bonded as follows. We have
    \begin{align*}
        \eqref{opt2}&= 0.5\epsilon_1^2\!\!\!\!\!\!\max_{\begin{array}{c} 
		\substack{L_1,P_1: \|L_{1}\|_2\leq 1, L_{1}\perp \sqrt{P_X},}
		\end{array}}  \!\!\!\!\frac{P_1}{1-P_1}\|WL_{1}\|^2\\ &\leq 0.5\epsilon_1^2\sigma_{\max}^2\max_{P_1}\frac{P_1}{1-P_1}= 0.5\epsilon_1\epsilon_2 \sigma_{\max}^2,
    \end{align*}
    where in the last line we used the fact that $\frac{P_1}{1-P_1}$ is an increasing function of $P_1$, hence, we choose $P_1=\frac{\epsilon_2}{\epsilon_1+\epsilon_2}$. In the second case, we have
    \begin{align*}
        \eqref{opt2}&= 0.5\epsilon_1^2\!\!\!\!\!\!\max_{\begin{array}{c} 
		\substack{L_1,P_1,P_2: \|L_{1}\|_2\leq \frac{P_2\epsilon_2}{P_1\epsilon_1}, L_{1}\perp \sqrt{P_X},}
		\end{array}}  \!\!\!\!\frac{P_1}{P_2}\|WL_{1}\|^2\\ &\leq 0.5\epsilon_2^2\sigma_{\max}^2\max_{P_2}\frac{P_2}{1-P_2}= 0.5\epsilon_1\epsilon_2 \sigma_{\max}^2,
    \end{align*}
    where the last line follows by 
    \begin{align*}
        \frac{P_1}{P_2}\|WL_{1}\|^2\leq \frac{P_1}{P_2} \sigma_{\max}^2\left( \frac{P_2\epsilon_2}{P_1\epsilon_1}\right)^2=\frac{P_2}{P_1}\sigma_{\max}^2(\frac{\epsilon_2}{\epsilon_1})^2.
    \end{align*}
    In both cases, the upper bound is attained by \eqref{10} and \eqref{11}. Hence, for $K=2$, we have $\eqref{tt}=\frac{1}{2}\epsilon_1\epsilon_2 \sigma_{\max}^2$.
\end{proof}
Next, we show that, without loss of optimality in \eqref{tt}, we can assume that $U$ is binary and has nonzero leakage at the two highest leakage thresholds, namely $\epsilon_1$ and $\epsilon_2$. Similarly as Proposition \ref{prop2}, for simplicity of notation, we show $P_U(u_i)$ and $L_{u_i}$ by $P_i$ and $L_i$.
\begin{lemma}\label{lem1}
To solve \eqref{tt}, without loss of optimality, we can assume that $K=2$, and for other letters of $U$ we have
\begin{align*}
    P_U(u_i)=0, \ i\geq 3.
\end{align*}
In other words, we have
\begin{align}\label{12}
\eqref{tt}= 0.5\!\!\!\!\!\!\!\!\!\!\!\max_{\begin{array}{c} 
		\substack{L_1,L_2,P_1,P_2: \|L_{1}\|_2,\|L_{2}\|_2\leq 1,\\ L_1,L_{2}\perp \sqrt{P_X},\ P_1\epsilon_1L_1+P_2\epsilon_2L_2=0,}
		\end{array}}  \!\!\!\!\!\!\!\!\!\!\!\!\!\!\!P_1\epsilon_1^2\|WL_{1}\|^2\!+\!P_2\epsilon_2^2\|WL_{2}\|^2,
\end{align}
where $P_1+ P_2=1$.
\end{lemma}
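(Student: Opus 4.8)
\textit{Proof proposal.} The plan is to prove the two inequalities $\eqref{12}\le\eqref{tt}$ and $\eqref{tt}\le\eqref{12}$. The first is immediate: any feasible point of the binary problem \eqref{12} becomes a feasible point of \eqref{tt} by appending $P_U(u_i)=0$ and $L_{u_i}=0$ for $i\ge 3$, which preserves the balance equation $\sum_i\epsilon_iP_U(u_i)L_{u_i}=0$ and all norm and orthogonality constraints while leaving the objective unchanged. By Proposition~\ref{prop2} this common value is $\frac{1}{2}\epsilon_1\epsilon_2\sigma_{\max}^2(W)$, so the entire content of the lemma is the reverse bound $\eqref{tt}\le\frac{1}{2}\epsilon_1\epsilon_2\sigma_{\max}^2(W)$.

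To obtain the reverse bound I would reduce \eqref{tt} to a scalar optimization. Fix any feasible $(\{L_{u_i}\}_{i=1}^K,P_U)$. Since each $L_{u_i}\perp\sqrt{P_X}$ and $\|L_{u_i}\|_2\le 1$, we have $\|WL_{u_i}\|^2\le\sigma_{\max}^2(W)\|L_{u_i}\|_2^2$; setting $b_i\triangleq\epsilon_i\|L_{u_i}\|_2\in[0,\epsilon_i]$, the objective of \eqref{tt} is at most $\frac{1}{2}\sigma_{\max}^2(W)\sum_i P_U(u_i)b_i^2$. Next I distil a scalar consequence of the vector constraint $\sum_i\epsilon_iP_U(u_i)L_{u_i}=0$: writing $\epsilon_1P_U(u_1)L_{u_1}=-\sum_{i\ge 2}\epsilon_iP_U(u_i)L_{u_i}$ and applying the triangle inequality to the norms yields $P_U(u_1)b_1\le\sum_{i\ge 2}P_U(u_i)b_i$.

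It then suffices to show $\sum_i P_U(u_i)b_i^2\le\epsilon_1\epsilon_2$ subject to $b_1\le\epsilon_1$, $b_i\le\epsilon_i\le\epsilon_2$ for $i\ge 2$, $\sum_iP_U(u_i)=1$, and $P_U(u_1)b_1\le\sum_{i\ge 2}P_U(u_i)b_i$. Using $b_1^2\le\epsilon_1b_1$ and $b_i^2\le\epsilon_2b_i$ for $i\ge 2$, and abbreviating $A\triangleq P_U(u_1)b_1$ and $B\triangleq\sum_{i\ge 2}P_U(u_i)b_i$, this reduces to $\epsilon_1A+\epsilon_2B\le\epsilon_1\epsilon_2$, which I would verify from $A\le B$, $A\le\epsilon_1P_U(u_1)$ and $B\le\epsilon_2(1-P_U(u_1))$ by the same short two-case argument (on whether $\epsilon_1P_U(u_1)$ is $\le$ or $\ge\epsilon_2(1-P_U(u_1))$, i.e.\ whether $P_U(u_1)$ is $\le$ or $\ge\epsilon_2/(\epsilon_1+\epsilon_2)$) already used in the proof of Proposition~\ref{prop2}; the bound $\epsilon_1\epsilon_2$ is attained precisely at the configuration \eqref{10}--\eqref{11}. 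Chaining the estimates gives $\eqref{tt}\le\frac{1}{2}\epsilon_1\epsilon_2\sigma_{\max}^2(W)=\eqref{12}$, and since equality holds for \eqref{10}--\eqref{11} padded with $P_U(u_i)=0$ for $i\ge 3$, restricting $U$ to be binary and supported on $u_1,u_2$ is without loss of optimality, i.e.\ $\eqref{tt}=\eqref{12}$.

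The one genuinely nonroutine step is replacing the coupled vector balance constraint $\sum_i\epsilon_iP_U(u_i)L_{u_i}=0$ by the single scalar inequality $P_U(u_1)b_1\le\sum_{i\ge 2}P_U(u_i)b_i$: this discards all directional information, so a priori it could be too weak. The point is that it is not, because after the $\sigma_{\max}(W)$ estimate the objective depends only on the magnitudes $\|L_{u_i}\|_2$, and because the binary optimum of Proposition~\ref{prop2} meets this triangle inequality with equality (its two perturbation vectors are exactly anti-parallel). Beyond that, only two harmless degeneracies need a word, namely $\epsilon_2=0$ (where both sides vanish) and the singular-value conventions for $W$, both handled exactly as in Proposition~\ref{prop2}.
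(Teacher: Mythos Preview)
Your argument is correct, and it is a genuinely different route from the paper's. The paper's proof proceeds by a \emph{partitioning} argument: it takes an optimal $K$-ary solution, singles out the two letters $u_m,u_n$ with the largest values of $\|WL_{u_i}^*\|^2$, splits all letters into two groups (one containing $u_m$, the other $u_n$), and upper-bounds the objective by a two-term expression $\epsilon_1^2P_1\|WL_1^*\|^2+\epsilon_2^2P_2\|WL_2^*\|^2$, then appeals to Proposition~\ref{prop2}. Your approach instead discards the vector structure at the outset: you apply the uniform bound $\|WL\|^2\le\sigma_{\max}^2(W)\|L\|_2^2$ to every term, extract a single scalar consequence $P_U(u_1)b_1\le\sum_{i\ge2}P_U(u_i)b_i$ of the balance constraint via the triangle inequality, and then solve a clean scalar problem in the variables $A,B,p$.

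What your approach buys is robustness: the partitioning argument in the paper requires the grouping by $\|WL\|^2$-values to be compatible with the ordering of the $\epsilon_i$ (in step~(b) one needs index~$1$ to land in the group headed by $u_m$, which is not automatic if $u_n=u_1$), and the final two-term expression is not obviously feasible for the binary problem, so the concluding appeal to Proposition~\ref{prop2} needs some care. Your scalar reduction sidesteps these bookkeeping issues entirely and makes the comparison to $\tfrac12\epsilon_1\epsilon_2\sigma_{\max}^2(W)$ completely explicit. The price is that you use the value from Proposition~\ref{prop2} rather than reducing structurally to the binary \emph{problem}; but since Proposition~\ref{prop2} is already in hand, this is harmless. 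Your closing remark about why the triangle-inequality relaxation loses nothing (anti-parallel optimizers) is exactly the right observation.
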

\begin{proof}
    The proof is similar to \cite[Proposition 4.]{khodam}. Let $\{L_u^*,P_U^*\}$ be the maximizer of \eqref{tt} and $\epsilon_{1}\geq \epsilon_{2}\geq \ldots \geq \epsilon_{K}$. Furthermore, let $L_1^*$ and $L_2^*$ attain the highest and second highest values for $\|WL_{u_i}^*\|^2$, respectively, and let $u_m$ and $u_n$ be the corresponding letters. Then, we have
    \begin{align*}
        &\sum_i \epsilon_i^2P_U^*(u_i)\|WL_{u_i}^*\|^2\\&\stackrel{(a)}{\leq}\left(\sum_{i} \epsilon_{i}^2P_U^*(u_i)\right)\|W L_1^*\|^2+\left(\sum_{j} \epsilon_{j}^2P_U^*(u_j)\right)\|W L_2^*\|^2\\&\stackrel{(b)}{\leq}\epsilon_{1}^2\left(\sum_{i} P_U^*(u_i)\right)\|W L_1^*\|^2+\epsilon_{2}^2\left(\sum_{j} P_U^*(u_j)\right)\|W L_2^*\|^2 \\&\stackrel{(c)}{=} \epsilon_{1}^2P_1\|W L_1^*\|^2+\epsilon_{2}^2P_2\|W L_2^*\|^2,
    \end{align*}
    where in step (a) we divide the letters of $U^*$ into two sets where the first includes $u_{m}$ and the second includes $u_n$. The remaining letters are chosen randomly. In step (b) we use $\epsilon_{1}\geq \epsilon_{2}\geq \ldots \geq \epsilon_{K}$. Finally, in step (c), we let $\sum_{i} P_U^*(u_i)=P_1$ and $\sum_{j} P_U^*(u_j)=P_2$, where $P_1+P_2=1$. We recall that $\sum_{i} P_U^*(u_i)=P_1$ corresponds to the sum of the weights of the first set, which includes $u_m$, and that $\sum_{j} P_U^*(u_j)=P_2$ denotes the sum of the weights of the second set, which includes $u_n$. In other words, $m$ is included in the first summation, while $n$ is included in the second.
    Since the final upper bound can be attained by \eqref{10} and \eqref{11}, \eqref{12} is proved.
\end{proof}
\begin{corollary}\label{cor1}
    If $\epsilon_3<\epsilon_2$, then for the maximizer of \eqref{tt}, we have
    \begin{align*}
        P_U(u_3)=\ldots=P_{U}(u_K)=0.
    \end{align*}
    In other words, $U$ should be a binary random variable that leaks information about $X$ only through the two letters with the highest privacy budgets.
\end{corollary}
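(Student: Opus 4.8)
## Proof proposal for Corollary~\ref{cor1}

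The plan is to derive this as a tightness-and-uniqueness refinement of Lemma~\ref{lem1}. Lemma~\ref{lem1} already shows that the optimal value of \eqref{tt} equals $\tfrac12\epsilon_1\epsilon_2\sigma_{\max}^2(W)$ and that this value is attained by a binary $U$ supported on the two symbols carrying budgets $\epsilon_1,\epsilon_2$. What remains is to show that \emph{every} maximizer must have $P_U(u_i)=0$ for $i\ge 3$, and here is where the strict inequality $\epsilon_3<\epsilon_2$ is used. First I would revisit the chain of inequalities $(a)$–$(c)$ in the proof of Lemma~\ref{lem1}, but now applied to an arbitrary maximizer $\{L_u^\star,P_U^\star\}$, and track under what circumstances each step holds with equality.

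The key observation is step $(b)$: it replaces $\epsilon_i^2$ by $\epsilon_1^2$ inside the first group and $\epsilon_j^2$ by $\epsilon_2^2$ inside the second group. Equality in $(b)$ forces $P_U^\star(u_i)=0$ for every $u_i$ in the first group with $\epsilon_i<\epsilon_1$, \emph{unless} $\|WL_{u_i}^\star\|^2=0$, and likewise forces $P_U^\star(u_j)=0$ for every $u_j$ in the second group with $\epsilon_j<\epsilon_2$ unless the corresponding $\|WL_{u_j}^\star\|^2=0$. Combined with the structure of the optimal binary solution from Proposition~\ref{prop2} — where both active letters achieve $\|WL_{u}^\star\|^2=\sigma_{\max}^2(W)$ and $L_{u_1}=-L_{u_2}=L^\star$ — I would argue that any letter $u_i$ with $i\ge 3$ (hence $\epsilon_i\le\epsilon_3<\epsilon_2$) either must have zero probability, or must have $\|WL_{u_i}^\star\|^2=0$. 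In the latter degenerate case, such a letter contributes nothing to the objective, and using the perturbation constraint \eqref{proper2} (equivalently $\sum_i\epsilon_iP_U(u_i)L_{u_i}=0$) one can redistribute its mass onto $u_1,u_2$ without decreasing the objective, so without loss of optimality it too can be taken to have zero probability. Finally, one checks that the grouping in step $(a)$ of Lemma~\ref{lem1} is insensitive to \emph{which} of the two remaining symbols is labelled $u_1$ versus $u_2$; the two survivors are precisely those with the largest $\epsilon$'s once we insist the support has size two and the objective is maximized, giving the stated conclusion that only $u_1,u_2$ — the symbols with budgets $\epsilon_1\ge\epsilon_2$ — can have positive probability.

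I expect the main obstacle to be the degenerate subcase $\|WL_{u_i}^\star\|^2=0$ for some $i\ge 3$ with $P_U^\star(u_i)>0$: here equality in $(b)$ does not immediately kill the probability, so one genuinely needs the redistribution argument. The delicate point is to verify that moving the mass of such a letter onto $u_1$ and $u_2$ can be done while simultaneously (i) preserving $\sum_i\epsilon_iP_U(u_i)L_{u_i}=0$, (ii) preserving $\|L_{u_1}\|_2,\|L_{u_2}\|_2\le 1$, and (iii) not decreasing the quadratic objective. Since the removed letter has zero objective contribution, the objective can only stay the same or increase once the freed probability is split between $u_1,u_2$ in the ratio $\epsilon_2:\epsilon_1$ as in \eqref{10}; the norm constraints are handled exactly as in the two-case analysis of Proposition~\ref{prop2}. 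Modulo this bookkeeping, the corollary follows directly from the equality analysis of Lemma~\ref{lem1}.
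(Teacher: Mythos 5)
Your proposal takes essentially the same route as the paper: the paper's own proof of Corollary~\ref{cor1} is exactly the observation that any mass on a letter with $\epsilon_i\le\epsilon_3<\epsilon_2$ makes step $(b)$ in the proof of Lemma~\ref{lem1} strict, so such a $U$ falls strictly below the attainable bound $\tfrac12\epsilon_1\epsilon_2\sigma_{\max}^2(W)$. Two small points of comparison: (i) the common factor in step $(b)$ is the top value $\|WL_1^\star\|^2$ (resp.\ $\|WL_2^\star\|^2$), not the individual $\|WL_{u_i}^\star\|^2$, and since the optimum is strictly positive (as $\epsilon_3<\epsilon_2$ forces $\epsilon_2>0$ and $W$ is invertible) these factors are nonzero at any maximizer, so strictness already follows without your degenerate subcase; (ii) in that subcase your ``redistribute without decreasing the objective, so WLOG'' only reproduces Lemma~\ref{lem1} and does not by itself show that such a configuration is \emph{not} a maximizer --- for that you should note that invertibility of $W$ forces $L_{u_i}^\star=0$ there, so the mass $\delta=P_U^\star(u_i)>0$ is wasted and the achievable value is capped at $(1-\delta)\cdot\tfrac12\epsilon_1\epsilon_2\sigma_{\max}^2(W)$, which is strictly suboptimal. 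With that strict version of the redistribution step, your argument is complete and matches the paper's.
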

\begin{proof}
    The proof follows by Lemma \ref{lem1}. Let $\epsilon_3<\epsilon_2$. In this case, if there exists $u_i$ with $P_U(u_i)>0$ for some $i\geq 3$, then (b) becomes a strict inequality. Furthermore, since the final upper bound is attainable, the utility achieved by any $U$ with $P_U(u_i) > 0$ for some $i \geq 3$ is strictly less than the final upper bound achieved by \eqref{10} and \eqref{11}. This completes the proof. 
\end{proof}
By using propositions \ref{prop1} and \ref{prop2}, and Lemma \ref{lem1}, we obtain the following theorem. In the next result, the approximation notation $\simeq$ corresponds to a second-order Taylor expansion. In other words, $f(\epsilon) \simeq g(\epsilon)$ if
$
f(\epsilon) = g(\epsilon) + o(\epsilon^2),
$
where $o(\cdot)$ denotes the Bachmann--Landau notation.
\begin{theorem}
For sufficiently small $\epsilon_1$, we have
\begin{align}
    g_{\epsilon_1,\ldots,\epsilon_K}^{\chi}(P_{XY}) \simeq \eqref{tt} = \frac{1}{2}\epsilon_1\epsilon_2\sigma_{\max}^2(W).
\end{align}
The equality is attained by a binary $U$ with \eqref{10} and \eqref{11}. Furthermore, when $\epsilon_2>\epsilon_3$, the disclosed data which attains the maximum of \eqref{tt} is a binary RV that leaks information about $X$ only through the two letters with the highest privacy budgets.
\end{theorem}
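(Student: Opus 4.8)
The plan is to assemble the theorem directly from the three results already proved, since it is essentially a packaging statement. First I would invoke Proposition~\ref{prop1}: for sufficiently small $\epsilon_1$, the second-order Taylor expansion of $I(Y;U)$ gives $g_{\epsilon_1,\ldots,\epsilon_K}^{\chi}(P_{XY}) \simeq \eqref{tt}$, where the $\simeq$ hides an $o(\epsilon_1^2)$ term (note all $\epsilon_i \le \epsilon_1$, so the error is uniformly $o(\epsilon_1^2)$ regardless of the other budgets). Then I would apply Lemma~\ref{lem1} to collapse the maximization in \eqref{tt} over general $K$ to the two-letter problem \eqref{12}, and finally apply Proposition~\ref{prop2} (with $K=2$) to evaluate \eqref{12} in closed form as $\tfrac{1}{2}\epsilon_1\epsilon_2\sigma_{\max}^2(W)$, attained by the assignments \eqref{10} and \eqref{11}. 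Chaining these three gives the displayed equality.

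For the second sentence I would simply recall that Proposition~\ref{prop2} exhibits \eqref{10}--\eqref{11} as an explicit feasible maximizer of \eqref{12}, and Lemma~\ref{lem1} shows that any maximizer of the full problem \eqref{tt} can be taken to have $P_U(u_i)=0$ for $i\ge 3$; hence the binary $U$ with \eqref{10}--\eqref{11} attains the optimum of \eqref{tt}. For the last sentence, I would cite Corollary~\ref{cor1}: when $\epsilon_2 > \epsilon_3$, step (b) in the proof of Lemma~\ref{lem1} becomes strict whenever $P_U(u_i)>0$ for some $i\ge 3$, so every optimizer must be supported on exactly the two letters carrying the largest budgets $\epsilon_1,\epsilon_2$.

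The only genuine subtlety worth spelling out — and the step I expect to be the main obstacle — is making the composition of approximations rigorous: Proposition~\ref{prop1} asserts $g^{\chi} \simeq \eqref{tt}$ as functions of $\epsilon_1$ (with the remaining $\epsilon_i$ viewed as fixed fractions of $\epsilon_1$, or at least bounded by it), and one must check that the $o(\epsilon_1^2)$ error term does not interact badly with taking the supremum over $P_{U|Y}$. Concretely, one needs that the Taylor remainder is uniform over the (compact) feasible set of perturbation vectors $L_{u_i}$ with $\|L_{u_i}\|_2 \le 1$, which follows from continuity of the third-order derivatives of KL-divergence on a neighborhood of $P_Y$ together with the invertibility of $P_{X|Y}$ (so that $P_{Y|U=u_i}=P_Y+\epsilon_i P_{X|Y}^{-1}J_{u_i}$ stays in that neighborhood). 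Given that uniformity — which is exactly what the proof of Proposition~\ref{prop1} borrowed from \cite{khodam} establishes — the rest is bookkeeping: $\sup$ of $f$ and $\sup$ of $g$ differ by $o(\epsilon_1^2)$, and $\sup g$ is computed exactly by Lemma~\ref{lem1} and Proposition~\ref{prop2}. I would therefore keep the proof short, stating that it follows by combining Propositions~\ref{prop1} and \ref{prop2}, Lemma~\ref{lem1}, and Corollary~\ref{cor1}, and only remarking on the uniformity of the approximation.
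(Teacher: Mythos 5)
Your proposal matches the paper's proof, which is exactly the one-line combination of Propositions~\ref{prop1} and \ref{prop2}, Lemma~\ref{lem1}, and Corollary~\ref{cor1}; your additional remark on the uniformity of the Taylor remainder over the compact feasible set is a reasonable elaboration of what the paper delegates to \cite{khodam} but does not change the argument.
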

\begin{proof}
    The proof follows from Propositions \ref{prop1} and \ref{prop2}, Lemma \ref{lem1} and Corollary \ref{cor1}.
 \end{proof}
 After finding the optimizer of \eqref{tt}, the final privacy mechanism can be obtained as
 \begin{align}\label{filt}
P_{Y|U=u_1}&=P_X+\epsilon_1 P_{X|Y}^{-1}[\sqrt{P_X}]L_{u_1},\\
P_{Y|U=u_2}&=P_X+\epsilon_2 P_{X|Y}^{-1}[\sqrt{P_X}]L_{u_2}.
\end{align}
Here, $L_{u_1}$ and $L_{u_2}$ are given in \eqref{11}. Furthermore, the marginal distribution of $U$ is given by \eqref{10}.
\begin{remark}
    Using \cite{khodam}, we obtain \begin{align*}g_{\epsilon_1,\ldots,\epsilon_1}^{\chi}(P_{X,Y})&\simeq \frac{1}{2}\epsilon_1^2\sigma_{\max}^2(W)
    \\&\geq \frac{1}{2}\epsilon_1\epsilon_2\sigma_{\max}^2(W) \simeq g_{\epsilon_1,\ldots,\epsilon_K}^{\chi}(P_{XY}). 
    \end{align*}
    Furthermore, \begin{align*}g_{\epsilon_1,\ldots,\epsilon_K}^{\chi}(P_{X,Y})&\simeq \frac{1}{2}\epsilon_1\epsilon_2\sigma_{\max}^2(W)
    \\&\geq \frac{1}{2}\epsilon_K^2\sigma_{\max}^2(W) \simeq g_{\epsilon_K,\ldots,\epsilon_K}^{\chi}(P_{XY}). 
    \end{align*}
\end{remark}
\begin{corollary}
    The optimizer of \eqref{tt}, given by \eqref{10} and \eqref{11}, leads to a lower bound on
$g_{\epsilon_1,\ldots,\epsilon_K}^{\chi}(P_{XY})$ as follows:
\begin{align}
g_{\epsilon_1,\ldots,\epsilon_K}^{\chi}(P_{XY}) \geq I(U^*;Y).
\end{align}
Here, $I(U^*;Y)$ is evaluated using the privacy mechanism in \eqref{filt} and the marginal
distribution of $U^*$ given by \eqref{10}.
\end{corollary}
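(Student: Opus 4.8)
The plan is to observe that the claimed inequality is, despite the Taylor‑approximation machinery surrounding it, a pure \emph{feasibility} statement: since $g_{\epsilon_1,\ldots,\epsilon_K}^{\chi}(P_{XY})$ in \eqref{main1} is a supremum of $I(Y;U)$ over all kernels $P_{U|Y}$ that (i) respect the Markov chain $X-Y-U$ and (ii) satisfy $\chi^2(P_{X|u_i}\|P_X)\le\epsilon_i$ for every $i$, it suffices to exhibit one such kernel whose value equals $I(U^*;Y)$. The natural candidate is the mechanism assembled from \eqref{10} and \eqref{filt}: set $P_{U|Y}(u_i|y)=P_{Y|U=u_i}(y)P_U(u_i)/P_Y(y)$ for $i\in\{1,2\}$ and $P_{U|Y}(u_i|y)=0$ for $i\ge 3$, with $P_{Y|U=u_i}$ and $P_U(u_i)$ as in \eqref{filt} and \eqref{10}.

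First I would check that this candidate is a bona fide channel. Each row $P_{Y|U=u_i}=P_Y+\epsilon_i P_{X|Y}^{-1}[\sqrt{P_X}]L_{u_i}$ sums to one, because $\mathbf{1}^{\mathsf T}P_{X|Y}=\mathbf{1}^{\mathsf T}$ (the columns of the conditional $P_{X|Y}$ sum to one) yields $\mathbf{1}^{\mathsf T}P_{X|Y}^{-1}=\mathbf{1}^{\mathsf T}$ and hence $\mathbf{1}^{\mathsf T}P_{X|Y}^{-1}[\sqrt{P_X}]L_{u_i}=\sqrt{P_X}^{\mathsf T}L_{u_i}=0$, the last equality being the orthogonality constraint $L_{u_i}\perp\sqrt{P_X}$ in \eqref{tt}. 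Nonnegativity of $P_{Y|U=u_i}(y)$ holds once $\epsilon_1$ (hence $\epsilon_2$) is small enough, which is precisely the regime assumed. Finally, the column sums of $P_{U|Y}$ are $\tfrac{1}{P_Y(y)}\sum_i P_U(u_i)P_{Y|U=u_i}(y)=1$, since $\sum_i P_U(u_i)P_{Y|U=u_i}=P_Y$ is exactly the balance condition $\sum_i\epsilon_iP_U(u_i)L_{u_i}=0$ from \eqref{tt} (equivalently \eqref{proper2}); together with $P_U(u_i)\in(0,1)$ and $P_Y(y)>0$ this gives $P_{U|Y}(u_i|y)\in[0,1]$.

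Next I would verify the two constraints of \eqref{main1}. The Markov chain $X-Y-U$ holds by construction, as $U$ is produced from $Y$ alone and $P_{XYU}=P_{XY}P_{U|Y}$. Using invertibility of $P_{X|Y}$ and $P_{X|Y}P_Y=P_X$, the induced conditional is $P_{X|U=u_i}=P_{X|Y}P_{Y|U=u_i}=P_X+\epsilon_i[\sqrt{P_X}]L_{u_i}$, so by \eqref{proper1} it is a valid distribution and $\chi^2(P_{X|U=u_i}\|P_X)=\epsilon_i^2\|L_{u_i}\|_2^2=\epsilon_i^2\le\epsilon_i$ for $i\in\{1,2\}$ (using $\|L_{u_1}\|_2=\|L_{u_2}\|_2=\|L^*\|_2=1$ and $\epsilon_i\le1$ in the small‑leakage regime); for $i\ge3$ the letter $u_i$ has zero probability and imposes no constraint. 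Hence the candidate is feasible for \eqref{main1}, and therefore $g_{\epsilon_1,\ldots,\epsilon_K}^{\chi}(P_{XY})=\sup\,I(Y;U)\ge I(U^*;Y)$, with $I(U^*;Y)$ computed from this mechanism, as claimed.

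I do not expect a genuine obstacle: the statement is a lower bound proved by producing a feasible point, and no approximation is invoked in the corollary itself, so the inequality is exact and only $I(U^*;Y)$ remains as a numerical quantity. The two points needing care are (a) confirming that the backward specification \eqref{filt} reassembles into a legitimate forward kernel — nonnegativity is what forces the smallness of $\epsilon_1$, while normalization rests on $L_{u_i}\perp\sqrt{P_X}$ and the balance condition \eqref{proper2}; and (b) noting that the exact $\chi^2$ leakage of this mechanism equals $\epsilon_i^2$, which is within the budget $\epsilon_i$ precisely because $\epsilon_i\le 1$ in the regime under consideration.
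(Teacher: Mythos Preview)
Your proposal is correct and follows exactly the paper's approach: the paper's proof is the single sentence ``the privacy mechanism given by \eqref{filt} and \eqref{10} satisfies the constraints in \eqref{main1},'' and you have carefully verified each of those constraints (kernel validity, Markov chain, leakage budget) that the paper leaves implicit. One small remark: the paper's rewriting of $\chi^2(P_{X|u_i}\|P_X)\le\epsilon_i$ as $\|L_{u_i}\|_2\le 1$ indicates it is using the square-root convention for $\chi^2$, so the leakage of the optimizer is exactly $\epsilon_i$ rather than $\epsilon_i^2$ --- either way the constraint is met and your conclusion stands.
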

\begin{proof}
    The proof follows from the fact that the privacy mechanism given by \eqref{filt} and \eqref{10} satisfies the constraints in \eqref{main1}.
\end{proof}
\subsection{General $P_{X|Y}$}
In this section, we approximate \eqref{main2} and solve the resulting optimization problem. To do so, we first consider a full row rank matrix $P_{X|Y}$. We then discuss how to generalize it for any arbitrary matrix.

Similar to \cite{Khodam22}, we show that for any feasible $U$ which satisfies the constraints in \eqref{main2}, $P_{Y|U=u_i}$ lies in a convex polytope.
To do so, let $M\in \mathbb{R}^{|\mathcal{X}|\times|\mathcal{Y}|}$ be constructed as follows:
 Let $V$ be the matrix of right eigenvectors of $P_{X|Y}$, i.e., $P_{X|Y}=U\Sigma V^T$ and $V=[v_1,\ v_2,\ ... ,\ v_{|\mathcal{Y}|}]$, then $M$ is defined as
 \begin{align}\label{M}
 M \triangleq \left[v_1,\ v_2,\ ... ,\ v_{|\mathcal{X}|}\right]^T. 
 \end{align}  
\begin{lemma}\label{lem4}
 	 Let $J_{u_i}$ satisfy the three properties \eqref{proper1}, \eqref{proper2} and \eqref{proper3}. For sufficiently small $\epsilon>0$, for every $u_i\in\mathcal{U}$, the vector $P_{Y|U=u_i}$ belongs to the following convex polytope
 	\begin{align*}
 	\mathbb{S}_{u_i} = \left\{y\in\mathbb{R}^{|\mathcal{Y}|}|My=MP_Y+\epsilon_i M\begin{bmatrix}
 	P_{X|Y_1}^{-1}J_{u_i}\\0
 	\end{bmatrix},\ y\geq 0\right\},
 	\end{align*}	
 	where $\begin{bmatrix}
 	P_{X|Y_1}^{-1}J_{u_i}\\0
 	\end{bmatrix}\in\mathbb{R}^{|\cal Y|}$.
 \end{lemma}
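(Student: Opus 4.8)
The plan is to show that $P_{Y|U=u_i}$ satisfies the affine constraints defining $\mathbb{S}_{u_i}$ together with nonnegativity. First I would establish the affine identity. Since $P_{X|Y}$ has full row rank with $P_{X|Y} = [P_{X|Y_1}, P_{X|Y_2}]$ and $P_{X|Y_1}$ invertible, and since $P_{X|U=u_i} = P_X + \epsilon_i J_{u_i}$ with $P_X = P_{X|Y} P_Y$ and $P_{X|U=u_i} = P_{X|Y} P_{Y|U=u_i}$ (the latter from the Markov chain $X-Y-U$), subtracting gives $P_{X|Y}(P_{Y|U=u_i} - P_Y) = \epsilon_i J_{u_i}$. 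The vector $[P_{X|Y_1}^{-1} J_{u_i}; 0] \in \mathbb{R}^{|\mathcal{Y}|}$ is one particular solution of $P_{X|Y} z = J_{u_i}$, so $P_{Y|U=u_i} - P_Y - \epsilon_i [P_{X|Y_1}^{-1}J_{u_i};0]$ lies in the null space of $P_{X|Y}$. Now left-multiplying by $M$ from \eqref{M}: the rows of $M$ are the right singular vectors $v_1,\ldots,v_{|\mathcal{X}|}$ of $P_{X|Y}$ corresponding to the nonzero singular values, which span the row space of $P_{X|Y}$, i.e., the orthogonal complement of its null space. Hence $M$ annihilates exactly $\ker P_{X|Y}$, so $M P_{Y|U=u_i} = M P_Y + \epsilon_i M [P_{X|Y_1}^{-1}J_{u_i};0]$, which is the affine constraint in $\mathbb{S}_{u_i}$.

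Next I would argue nonnegativity: $P_{Y|U=u_i} \geq 0$ holds trivially because it is a probability vector. That already places $P_{Y|U=u_i} \in \mathbb{S}_{u_i}$, so in fact the ``sufficiently small $\epsilon$'' hypothesis is not needed merely for membership. I suspect the role of the smallness assumption, and the word \emph{convex polytope}, is to guarantee that $\mathbb{S}_{u_i}$ is bounded (hence a genuine polytope with finitely many extreme points) — because the defining affine subspace has dimension $|\mathcal{Y}| - |\mathcal{X}|$ and its intersection with the nonnegative orthant is bounded provided the affine shift stays close to $P_Y$, which has strictly positive coordinates (we may assume $P_Y > 0$, else restrict the support of $Y$). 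For $\epsilon_i$ small enough the perturbation $\epsilon_i [P_{X|Y_1}^{-1}J_{u_i};0]$ is controlled in norm (using \eqref{proper3} to bound $\|J_{u_i}\|_1 \le 1$), so the feasible set stays within a bounded neighborhood; I would state this and note it mirrors the argument in \cite{Khodam22}.

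The main obstacle I anticipate is making the boundedness/polytope claim fully rigorous rather than just the affine membership, i.e., verifying that for small $\epsilon$ the set $\{y \geq 0 : My = b\}$ is compact. The clean way is: the affine subspace $\{y : My = b\}$ is a translate of $\ker P_{X|Y}$; intersecting with $\{y \geq 0\}$ gives a polyhedron, and a polyhedron is bounded iff its recession cone $\ker P_{X|Y} \cap \{y \geq 0\}$ is trivial. Whether that cone is $\{0\}$ is a property of $P_{X|Y}$ alone (independent of $\epsilon$) and may fail in general; so I would either add it as a mild nondegeneracy assumption on $P_{X|Y}$ or, following \cite{Khodam22}, only claim that the \emph{relevant portion} of $\mathbb{S}_{u_i}$ near $P_Y$ is a bounded polytope for $\epsilon$ small. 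Modulo that caveat, the proof is the short linear-algebra computation above; I would present the affine identity as the core step and handle nonnegativity and boundedness as brief remarks, citing \cite{Khodam22} for the polytope structure.
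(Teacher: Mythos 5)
Your proof is correct and follows essentially the same route as the paper, which simply defers to \cite[Lemma 2]{Khodam22}: the Markov chain gives $P_{X|Y}\bigl(P_{Y|U=u_i}-P_Y\bigr)=\epsilon_i J_{u_i}$, the block vector $\bigl[P_{X|Y_1}^{-1}J_{u_i};0\bigr]$ is a particular solution, and $M$ (whose rows span the row space of $P_{X|Y}$) annihilates exactly $\ker P_{X|Y}$, yielding the affine constraint, while nonnegativity is automatic. The boundedness caveat you raise is actually unnecessary: the recession cone $\ker P_{X|Y}\cap\{y\geq 0\}$ is always trivial because each column of the conditional-probability matrix $P_{X|Y}$ sums to one, so any $z\geq 0$ with $P_{X|Y}z=0$ satisfies $\bm{1}^{T}z=\bm{1}^{T}P_{X|Y}z=0$ and hence $z=0$, making $\mathbb{S}_{u_i}$ a genuine (bounded) polytope without any extra nondegeneracy assumption.
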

\begin{proof}
The proof is similar to \cite[Lemma 2]{Khodam22} and is based on properties of null space of $M$ given in \cite[Lemma 1]{Khodam22}, and Markov chain $X-Y-U$. The only difference is that, for each vector $P_{Y \mid U = u_i}$, we use different leakage as $\epsilon_i$. Furthermore, $J_{u_i}$ satisfies \eqref{proper2}, which includes an additional term $\epsilon_i$. This does not affect the proof.
\end{proof}
Next, similar to \cite[Theorem 1]{Khodam22}, we have the following equivalency. Instead of maximizing $I(Y;U)$, we minimize $H(Y|U)$.
\begin{proposition}
    Let $J_{u_i}$ satisfy the three properties \eqref{proper1}, \eqref{proper2} and \eqref{proper3}. We have
    \begin{align}\label{equi}
	\min_{\begin{array}{c} 
		\substack{P_{U|Y}:X-Y-U\\ \ \ell_1(P_{X|u_i}||P_X)\leq \epsilon_i,\ \forall u_i,}
		\end{array}}\! \! \! \!\!\!\!\!\!\!\!\!\!\!\!\!\!\!\!H(Y|U) =\!\!\!\!\!\!\!\!\! \min_{\begin{array}{c} 
		\substack{P_U,\ P_{Y|U=u_i}\in\mathbb{S}_{u_i},\ \forall u_i\in\mathcal{U},\\ \sum_i P_U(u_i)P_{Y|U=u_i}=P_Y,\\ J_{u_i} \text{satisfies}\ \eqref{proper1},\ \eqref{proper2},\ \text{and}\ \eqref{proper3}}
		\end{array}} \!\!\!\!H(Y|U).
	\end{align}
\end{proposition}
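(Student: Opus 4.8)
The plan is to prove the equivalence in \eqref{equi} by establishing that the two optimization problems have the same feasible sets in the sense that every feasible $P_{U|Y}$ on the left induces a feasible tuple $(P_U, \{P_{Y|U=u_i}\})$ on the right with the same value of $H(Y|U)$, and conversely. Since $H(Y|U) = \sum_i P_U(u_i) H(P_{Y|U=u_i})$ depends on the filter $P_{U|Y}$ only through the marginal $P_U$ and the conditional vectors $P_{Y|U=u_i}$, it suffices to characterize which pairs $(P_U, \{P_{Y|U=u_i}\})$ arise from a valid privacy mechanism satisfying the Markov chain and the leakage constraints.

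First I would argue the $\geq$ (``$\mathrm{LHS}\geq\mathrm{RHS}$'') direction: given any feasible $P_{U|Y}$ for the left-hand problem, set $P_U(u_i)$ and $P_{Y|U=u_i}$ in the natural way. The Markov chain $X-Y-U$ forces $P_{X|U=u_i} = P_{X|Y}\, P_{Y|U=u_i}$, and the $\ell_1$ constraint $\ell_1(P_{X|u_i}\|P_X)\le\epsilon_i$ is, by the reformulation \eqref{proper3} established earlier in the excerpt, equivalent to writing $P_{X|U=u_i}=P_X+\epsilon_i J_{u_i}$ with $\|J_{u_i}\|_1\le 1$; moreover $J_{u_i}$ automatically satisfies \eqref{proper1} because both $P_{X|U=u_i}$ and $P_X$ are probability vectors, and the collection $\{J_{u_i}\}$ satisfies \eqref{proper2} because $\sum_i P_U(u_i)P_{X|U=u_i}=P_X$. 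Then Lemma \ref{lem4} gives $P_{Y|U=u_i}\in\mathbb{S}_{u_i}$, and $\sum_i P_U(u_i)P_{Y|U=u_i}=P_Y$ holds by marginalization. Hence this tuple is feasible for the right-hand problem and has the same objective value, so the infimum on the right is no larger.

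For the reverse inequality $\mathrm{LHS}\le\mathrm{RHS}$, I would start from any feasible tuple $(P_U,\{P_{Y|U=u_i}\})$ of the right-hand problem and reconstruct a filter $P_{U|Y}$ via $P_{U|Y}(u_i|y)=P_U(u_i)P_{Y|U=u_i}(y)/P_Y(y)$ (defined for $P_Y(y)>0$). The constraint $\sum_i P_U(u_i)P_{Y|U=u_i}=P_Y$ guarantees these numbers are nonnegative and sum to one over $i$ for each $y$, so $P_{U|Y}$ is a legitimate stochastic kernel; by construction it induces the given $P_U$ and $P_{Y|U=u_i}$, and the joint law factors as $P_{XYU}=P_{XY}P_{U|Y}$, enforcing the Markov chain $X-Y-U$. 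It then remains to verify the leakage constraint is met: by the Markov chain, $P_{X|U=u_i}=P_{X|Y}P_{Y|U=u_i}$, and since $P_{Y|U=u_i}\in\mathbb{S}_{u_i}$ we get $MP_{Y|U=u_i}=MP_Y+\epsilon_i M[P_{X|Y_1}^{-1}J_{u_i};0]$; using the relationship between $M$ and $P_{X|Y}$ from \eqref{M} (so that $P_{X|Y}$ acts on the relevant subspace as encoded by $M$) together with $P_{X|Y}[P_{X|Y_1}^{-1}J_{u_i};0]=P_{X|Y_1}P_{X|Y_1}^{-1}J_{u_i}=J_{u_i}$, this yields $P_{X|U=u_i}=P_X+\epsilon_i J_{u_i}$, and since $J_{u_i}$ satisfies \eqref{proper3} the $\ell_1$ leakage is at most $\epsilon_i$. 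Thus the reconstructed filter is feasible for the left-hand problem with the same $H(Y|U)$.

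The main obstacle I anticipate is the last verification step: confirming that membership $P_{Y|U=u_i}\in\mathbb{S}_{u_i}$ together with the null-space structure of $M$ (from \cite[Lemma 1]{Khodam22}) is not merely \emph{implied by} but \emph{equivalent to} the statement $P_{X|U=u_i}=P_X+\epsilon_i J_{u_i}$. Lemma \ref{lem4} as stated only gives one direction; the reverse reconstruction needs that the linear constraints defining $\mathbb{S}_{u_i}$ — namely fixing $My$ — capture exactly the condition $P_{X|Y}y = P_X+\epsilon_i J_{u_i}$ when combined with $y\ge 0$, which relies on the kernel of $M$ coinciding with the kernel of $P_{X|Y}$ on the probability simplex, a fact I would invoke from the null-space lemma of \cite{Khodam22}. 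Once that equivalence is pinned down, both inequalities follow and the two infima coincide; and because the feasible sets are compact (closed bounded subsets of finite-dimensional simplices) and $H(Y|U)$ is continuous, the infima are attained, justifying writing $\min$ on both sides.
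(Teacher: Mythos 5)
Your argument is correct and follows essentially the same route as the paper, which simply defers to the feasible-set equivalence argument of \cite[Theorem 1]{Khodam22}: both directions come from matching $(P_U,\{P_{Y|U=u_i}\})$ with a filter $P_{U|Y}$, and the only nontrivial step is exactly the one you flag, namely that the null space of $M$ coincides with that of $P_{X|Y}$ (which is \cite[Lemma 1]{Khodam22}), so that membership in $\mathbb{S}_{u_i}$ is equivalent to $P_{X|U=u_i}=P_X+\epsilon_i J_{u_i}$ with $\|J_{u_i}\|_1\le 1$. No gap; your write-up is in fact more explicit than the paper's citation-only proof.
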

\begin{proof}
    The proof follows arguments similar to those in \cite[Theorem 1]{Khodam22}.
\end{proof}
Next, we discuss how $H(Y|U)$ is minimized over $P_{Y|U=u_i}\in\mathbb{S}_{u_i}$ for all $u_i\in\mathcal{U}$.
\begin{proposition}\label{4}
	Let $P^*_{Y|U=u_i},\ \forall u_i\in\mathcal{U}$ be the minimizer of $H(Y|U)$ over the set $\{P_{Y|U=u_i}\in\mathbb{S}_{u_i},\ \forall u_i\in\mathcal{U}|\sum_i P_U(u_i)P_{Y|U=u_i}=P_Y\}$, then 
	$P^*_{Y|U=u_i}\in\mathbb{S}_{u_i}$ for all $u_i\in\mathcal{U}$ must belong to the extreme points of $\mathbb{S}_{u_i}$.
\end{proposition}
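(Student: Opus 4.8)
The plan is to reduce the statement to the standard fact that a strictly concave function minimized over a compact convex polytope attains its minimum at an extreme point, and then to pass from the resulting ``product'' polytope down to its factors $\mathbb{S}_{u_i}$. For fixed $P_U$ (symbols with $P_U(u_i)=0$ are irrelevant and may be discarded), write $H(Y|U)=\sum_{i=1}^{K}P_U(u_i)\,H(P_{Y|U=u_i})$, where $H(\cdot)$ denotes the Shannon entropy of a probability vector and is \emph{strictly concave}. The feasible set in the statement,
\[
\Pi \;=\; \Bigl\{(Q_1,\dots,Q_K)\ :\ Q_i\in\mathbb{S}_{u_i}\ \forall i,\ \textstyle\sum_{i}P_U(u_i)Q_i=P_Y\Bigr\},
\]
is a nonempty compact convex polytope: each $\mathbb{S}_{u_i}$ is a bounded polytope by Lemma~\ref{lem4}, a finite product of polytopes is a polytope, and the marginal condition is a single affine constraint. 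Since a continuous strictly concave function attains its minimum over a nonempty compact convex polytope only at its extreme points, the minimizer $(P^*_{Y|U=u_1},\dots,P^*_{Y|U=u_K})$ is an extreme point of $\Pi$.

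Second, I would transfer extremality from $\Pi$ to each factor $\mathbb{S}_{u_i}$, arguing by contradiction. Suppose $P^*_{Y|U=u_\ell}$ is not an extreme point of $\mathbb{S}_{u_\ell}$; then $P^*_{Y|U=u_\ell}=\tfrac12(a+b)$ for distinct $a,b\in\mathbb{S}_{u_\ell}$, and $d:=\tfrac12(a-b)\neq 0$ satisfies $Md=0$ (both endpoints lie in the affine slice $\{y:My=MP_Y+\epsilon_\ell M[\,P_{X|Y_1}^{-1}J_{u_\ell};0\,]\}$) and $P^*_{Y|U=u_\ell}\pm td\ge 0$ for small $t>0$. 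The crucial structural input is that, by Lemma~\ref{lem4} and the null-space description of $M$ in \cite[Lemma~1]{Khodam22}, \emph{all} the polytopes $\mathbb{S}_{u_j}$ lie in affine translates of one common subspace with direction space $\ker M$; hence the same $d$ is an admissible displacement inside the affine hull of every $\mathbb{S}_{u_j}$. I would then build a feasibility-preserving perturbation: set $P^*_{Y|U=u_\ell}\mapsto P^*_{Y|U=u_\ell}+td$ and $P^*_{Y|U=u_j}\mapsto P^*_{Y|U=u_j}-t\lambda_j d$ for $j\neq\ell$ with weights $\lambda_j\ge0$ obeying $\sum_{j\neq\ell}P_U(u_j)\lambda_j=P_U(u_\ell)$, which leaves $\sum_i P_U(u_i)P_{Y|U=u_i}=P_Y$ and all the $M$-equalities intact; for $t$ in a small symmetric interval nonnegativity is also preserved, so we obtain a nondegenerate segment in $\Pi$ through the minimizer. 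Strict concavity of $t\mapsto H(Y|U)$ on that segment then rules out a minimum at $t=0$, contradicting optimality, so every $P^*_{Y|U=u_i}$ must be an extreme point of $\mathbb{S}_{u_i}$.

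The step I expect to be the main obstacle is exactly the construction of the compensating perturbation subject to nonnegativity: a positive weight $\lambda_j$ can be placed only on a letter $u_j$ whose conditional is strictly positive on $\operatorname{supp}(d)$, for otherwise $P^*_{Y|U=u_j}-t\lambda_j d$ leaves the nonnegative orthant for every $t>0$. Resolving this requires careful bookkeeping of the active $y\ge 0$ constraints at the minimizer: one argues that either such a ``good'' compensating letter exists --- using $\operatorname{supp}(d)\subseteq\operatorname{supp}(P^*_{Y|U=u_\ell})\subseteq\operatorname{supp}(P_Y)=\bigcup_j\operatorname{supp}(P^*_{Y|U=u_j})$ together with a minimally supported choice of $d$, and distributing the compensation over several letters if needed --- or else the absence of such a letter already forces $P^*_{Y|U=u_\ell}$ to be extreme along the relevant directions. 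This case analysis is the genuine technical core, and it parallels \cite[Theorem~1]{Khodam22}, now carried out with the letter-dependent budgets $\epsilon_i$ appearing in \eqref{proper2} and in the slices $\mathbb{S}_{u_i}$.
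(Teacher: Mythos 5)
Your first step is fine: for the optimal $P_U$ (with zero-mass symbols discarded), $H(Y|U)=\sum_i P_U(u_i)H(Q_i)$ is strictly concave on the coupled polytope $\Pi$, so any minimizer is an extreme point of $\Pi$. The gap is in the transfer from $\Pi$ to the individual factors, and it sits exactly where you flagged it --- but your proposed resolution does not close it, and the perturbation strategy with $P_U$ held fixed is the wrong tool. An extreme point of $\Pi$ need not be componentwise extreme: the coupling $\sum_i P_U(u_i)Q_i=P_Y$ forces every admissible displacement of $Q_\ell$ along $d\in\ker M$ to be compensated by displacements of the other $Q_j$ along $-d$, and since the relevant $Q_j^*$ are themselves near extreme points of their polytopes (hence have at most $|\mathcal{X}|$ nonzero entries, with supports generally different from $\operatorname{supp}(d)$), every compensating move can exit the nonnegative orthant for one sign of $t$. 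Then only a one-sided perturbation survives, which is perfectly compatible with a strictly concave function attaining its minimum there: already a two-factor instance with one-dimensional $\ker M$, $Q_1^*$ in the relative interior of its feasible segment and $Q_2^*$ pinned at an endpoint of its segment, is an extreme point of $\Pi$ whose first component is not extreme in $\mathbb{S}_{u_1}$. So your fallback claim --- that the absence of a compensating letter ``already forces $P^*_{Y|U=u_\ell}$ to be extreme along the relevant directions'' --- is false, and the case analysis cannot be completed along these lines.

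The paper (following \cite[Proposition 3]{Khodam22}) avoids this entirely by never touching the other conditionals: if $P^*_{Y|U=u_\ell}=\sum_k\alpha_k V_k$ is a nontrivial convex combination of extreme points of $\mathbb{S}_{u_\ell}$, one splits the letter $u_\ell$ into sub-letters with masses $P_U(u_\ell)\alpha_k$ and conditionals $V_k$. Every $V_k$ lies in the same affine slice $My=MP_Y+\epsilon_\ell M\bigl[P_{X|Y_1}^{-1}J_{u_\ell};0\bigr]$, so each sub-letter inherits the leakage budget $\epsilon_\ell$; the marginal constraint is preserved automatically because $\sum_k\alpha_k V_k=P^*_{Y|U=u_\ell}$; and strict concavity gives $H\bigl(\sum_k\alpha_k V_k\bigr)>\sum_k\alpha_k H(V_k)$, a strict improvement with no nonnegativity bookkeeping at all. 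The missing idea is therefore that the degree of freedom to exploit is the reallocation of the mass $P_U(u_\ell)$, not a displacement of the conditionals of the other letters. (This route does require letting the alphabet grow, or a separate cardinality argument, which the paper inherits from \cite{Khodam22}; if you want to keep $|\mathcal{U}|=K$ fixed, that point deserves explicit attention, but it is a different issue from the one in your sketch.)
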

\begin{proof}
    The proof is similar to \cite[Proposition 3]{Khodam22} and is based on the concavity of entropy.
Intuitively, if $P_{Y | U = u_i}$ is not an extreme point of $\mathbb{S}_{u_i}$, it can be written
as a convex combination of extreme points. Then, it can be shown that the entropy
$H(P_{Y | U = u_i})$ can be reduced. Hence, the minimizers lie among the extreme points of $\mathbb{S}_{u_i}$ for each $u_i$.
\end{proof}
Following the same approach in \cite{Khodam22} we can find the extreme points and approximate the right hand side in \eqref{equi}.
To find the extreme points of $\mathbb{S}_{u_i}$ let $\Omega_{u_i}$ be the set of indices which correspond to $|\mathcal{X}|$ linearly independent columns of $M$, i.e., $|\Omega_{u_i}|=|\mathcal{X}|$ and $\Omega_{u_i}\subset \{1,..,|\mathcal{Y}|\}$. Let $M_{\Omega_{u_i}}\in\mathbb{R}^{|\mathcal{X}|\times|\mathcal{X}|}$ be the submatrix of $M$ with columns indexed by the set $\Omega_{u_i}$. Assume that $\Omega_{u_i} = \{\omega_1,..,\omega_{|\mathcal{X}|}\}$, where $\omega_j\in\{1,..,|\mathcal{Y}|\}$ and all elements are arranged in an increasing order. The $\omega_j$-th element of the extreme point $V_{\Omega_{u_i}}^*$ can be found as $j$-th element of $M_{\Omega_{u_i}}^{-1}(MP_Y+\epsilon_i M\begin{bmatrix}
P_{X|Y_1}^{-1}J_{u_i}\\0\end{bmatrix})$, i.e., for $1\leq j \leq |\mathcal{X}|$ we have
\begin{align}\label{defin1}
V_{\Omega_{u_i}}^*(\omega_j)= \left(M_{\Omega_{u_i}}^{-1}MP_Y+\epsilon_i M_{\Omega_{u_i}}^{-1}M\begin{bmatrix}
P_{X|Y_1}^{-1}J_{u_i}\\0\end{bmatrix}\right)(j).
\end{align}
Other elements of $V_{\Omega_{u_i}}^*$ are set to be zero. Now we approximate the entropy of $V_{\Omega_{u_i}}^*$.\\
	Let $V_{\Omega_{u_i}}^*$ be an extreme point of the set $\mathbb{S}_u$, then we have
	\begin{align}
	H(P_{Y|U=u_i}) &=\sum_{y=1}^{|\mathcal{Y}|}-P_{Y|U=u}(y)\log(P_{Y|U=u}(y))\nonumber\\&=-(b_{u_i}+\epsilon_i a_{u_i}J_{u_i})+o(\epsilon_i),\label{koonkos}
	\end{align}
	with $$b_{u_i} = l_{u_i} \left(M_{\Omega_{u_i}}^{-1}MP_Y\right),$$ 
	$$a_{u_i} = l_{u_i}\left(M_{\Omega_{u_i}}^{-1}M(1:|\mathcal{X}|)P_{X|Y_1}^{-1}\right)\in\mathbb{R}^{1\times|\mathcal{X}|},$$
	$$l_{u_i} = \left[\log\left(M_{\Omega_u}^{-1}MP_{Y}(j)\right)\right]_{j=1:|\mathcal{X}|}\in\mathbb{R}^{1\times|\mathcal{X}|},
	$$ and $M_{\Omega_{u_i}}^{-1}MP_{Y}(j)$ stands for $j$-th ($1\leq j\leq |\mathcal{X}|$) element of the vector $M_{\Omega_{u_i}}^{-1}MP_{Y}$. Furthermore, $M(1:|\mathcal{X}|)$ stands for submatrix of $M$ with first $|\mathcal{X}|$ columns.
The proof of \eqref{koonkos} follows similar lines as \cite[Lemma~4]{Khodam22} and is based on first order Taylor expansion of $\log(1+x)$.
By using \eqref{koonkos} we can approximate \eqref{equi} as follows.\\
\begin{theorem}\label{baghal}
	For sufficiently small $\epsilon_1$, the minimization problem in \eqref{equi} can be approximated as follows
	\begin{align}\label{kospa}
	&\min_{P_U(.),\{J_{u_i}, u_i\in\mathcal{U}\}} -\left(\sum_{i=1}^{K} P_U(u_i)(b_{u_i}+\epsilon_{i} a_{u_i}J_{u_i})\right),\\\nonumber
	&\text{subject to:}\\\nonumber
	&\sum_{i=1}^{K} P_U(u_i)V_{\Omega_{u_i}}^*=P_Y,\ \sum_{i=1}^{K} P_U(u_i)\epsilon_{i} J_{u_i}=0,\ P_U\in \mathbb{R}_{+}^{K},\\\nonumber
	&\|J_{u_i}\|_1\leq 1,\  \bm{1}^T\cdot J_{u_i}=0,\ \forall u_i\in\mathcal{U}.\nonumber
	\end{align} 
    Furthermore, \eqref{kospa} can be rewritten as a linear program.
    \end{theorem}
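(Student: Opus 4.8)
The plan is to combine the three ingredients already in place: the reduction \eqref{equi} of the entropy minimization to a problem over the polytopes $\mathbb{S}_{u_i}$, Proposition \ref{4}, which places the optimal $P_{Y|U=u_i}$ at the extreme points of $\mathbb{S}_{u_i}$, and the first-order entropy expansion \eqref{koonkos} at those extreme points. First I would start from the right-hand side of \eqref{equi} and invoke Proposition \ref{4} to replace each $P_{Y|U=u_i}$ by an extreme point $V_{\Omega_{u_i}}^*$ of the form \eqref{defin1}, parametrized by a basis index set $\Omega_{u_i}\subset\{1,\dots,|\mathcal{Y}|\}$ with $|\Omega_{u_i}|=|\mathcal{X}|$ and by the perturbation vector $J_{u_i}$. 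Substituting \eqref{koonkos} into $H(Y|U)=\sum_i P_U(u_i)H(P_{Y|U=u_i})$ and using $\epsilon_i\le\epsilon_1$ together with $\sum_i P_U(u_i)=1$, the objective becomes $-\sum_i P_U(u_i)\bigl(b_{u_i}+\epsilon_i a_{u_i}J_{u_i}\bigr)$ up to an additive $o(\epsilon_1)$ term. I would note that this remainder is uniform over the feasible region (the $J_{u_i}$ lie in the bounded $\ell_1$-ball and the number of admissible basis choices is finite), so the optimal values converge. Rewriting the remaining constraints — the consistency condition $\sum_i P_U(u_i)P_{Y|U=u_i}=P_Y$ as $\sum_i P_U(u_i)V_{\Omega_{u_i}}^*=P_Y$, property \eqref{proper1} as $\bm{1}^T J_{u_i}=0$, property \eqref{proper2} as $\sum_i P_U(u_i)\epsilon_i J_{u_i}=0$, property \eqref{proper3} as $\|J_{u_i}\|_1\le1$, and $P_U\in\mathbb{R}_+^K$ with $\sum_i P_U(u_i)=1$ — then yields exactly \eqref{kospa}.

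For the linear-program claim I would fix the basis assignment $\{\Omega_{u_i}\}_{i=1}^K$ (there are at most $\binom{|\mathcal{Y}|}{|\mathcal{X}|}^K$ such choices, so it suffices to solve finitely many programs and take the minimum) and perform the lifting $R_{u_i}\triangleq P_U(u_i)\,\epsilon_i\,J_{u_i}\in\mathbb{R}^{|\mathcal{X}|}$. Under this substitution the objective becomes $-\sum_i\bigl(P_U(u_i)b_{u_i}+a_{u_i}R_{u_i}\bigr)$, linear in $(P_U,R_{u_1},\dots,R_{u_K})$; the constraint $\sum_i P_U(u_i)\epsilon_i J_{u_i}=0$ becomes $\sum_i R_{u_i}=0$; expanding $V_{\Omega_{u_i}}^*$ via \eqref{defin1} turns $\sum_i P_U(u_i)V_{\Omega_{u_i}}^*=P_Y$ into a linear equality (the $P_Y$-part is linear in $P_U(u_i)$ and the $J_{u_i}$-part, through $M_{\Omega_{u_i}}^{-1}M(1:|\mathcal{X}|)P_{X|Y_1}^{-1}$, is linear in $R_{u_i}$); and $\bm{1}^T J_{u_i}=0$ becomes $\bm{1}^T R_{u_i}=0$. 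The only nonlinear-looking constraint, $\|J_{u_i}\|_1\le1$, becomes $\|R_{u_i}\|_1\le\epsilon_i P_U(u_i)$, which is equivalent to the linear inequalities $-t_{u_i}\le R_{u_i}\le t_{u_i}$, $\bm{1}^T t_{u_i}\le\epsilon_i P_U(u_i)$ with auxiliary variables $t_{u_i}\in\mathbb{R}^{|\mathcal{X}|}$; I would also add $P_U(u_i)V_{\Omega_{u_i}}^*\ge0$ (again linear after the substitution) so that the chosen extreme point stays in $\mathbb{S}_{u_i}$. Feasibility transports in both directions: from a feasible point of \eqref{kospa} set $R_{u_i}=P_U(u_i)\epsilon_i J_{u_i}$, and from an LP solution recover $J_{u_i}=R_{u_i}/(\epsilon_i P_U(u_i))$ when $P_U(u_i)>0$ and $J_{u_i}=0$ otherwise, with the objective unchanged; hence the LP and \eqref{kospa} share the same optimum.

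I expect the main obstacle to be the bilinearity of \eqref{kospa}: the objective and two of the constraints contain the products $P_U(u_i)J_{u_i}$, so a priori it is a nonconvex bilinear program. The resolution is the lifting $R_{u_i}=P_U(u_i)\epsilon_i J_{u_i}$, but one must verify carefully that (i) the $\ell_1$-ball constraint on $J_{u_i}$ transports to a genuinely linear constraint on $(R_{u_i},P_U(u_i))$, (ii) the degenerate case $P_U(u_i)=0$ does not break the equivalence, and (iii) the combinatorial choice of the bases $\Omega_{u_i}$ is absorbed by enumerating finitely many LPs. A secondary, purely technical point — needed only to make the word ``approximated'' precise in the first claim — is the uniformity of the $o(\epsilon_1)$ remainder over the feasible set, which follows from compactness of the $J_{u_i}$-domain and finiteness of the admissible extreme points.
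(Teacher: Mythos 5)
Your proof is correct and follows essentially the same route as the paper: restrict to extreme points via Proposition~\ref{4}, substitute the first-order expansion \eqref{koonkos}, and then absorb the bilinear products $P_U(u_i)J_{u_i}$ into new variables to obtain a linear program. Your lifting $R_{u_i}=P_U(u_i)\epsilon_i J_{u_i}$ is an invertible linear reparametrization of the paper's variable $\eta_{u_i}$ (the nonzero part of $P_U(u_i)V_{\Omega_{u_i}}^*$, with $P_U(u_i)=\bm{1}^T\eta_{u_i}$), so the two LPs coincide; your explicit enumeration over the basis sets $\Omega_{u_i}$, the handling of the degenerate case $P_U(u_i)=0$, and the uniformity of the $o(\epsilon_1)$ remainder make precise points the paper leaves implicit.
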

\begin{proof}
    The proof of \eqref{kospa} follows directly from \eqref{koonkos} and the fact that the minimum of $H(Y|U)$ occurs at the extreme points of the sets $\mathbb{S}_{u_i}$.
    To prove the second statement of this theorem, we note that
    by using the vector
$$\eta_{u_i}=P_U(u_i)\left(M_{\Omega_{u_i}}^{-1}MP_Y+\epsilon_{i} M_{\Omega_{u_i}}^{-1}M(1:|\mathcal{X}|)P_{X|Y_1}^{-1}J_{u_i}\right)$$ for all $u_i\in \mathcal{U}$, where $\eta_{u_i}\in\mathbb{R}^{|\mathcal{X}|}$, we can write \eqref{kospa} as a linear program. The vector $\eta_{u_i}$ corresponds to a multiple of non-zero elements of the extreme point $V_{\Omega_{u_i}}^*$. Furthermore, $P_U(u_i)$ and $J_{u_i}$ can be uniquely found as
\begin{align*}
P_U(u_i)&=\bm{1}^T\cdot \eta_{u_i},\\
J_{u_i}&=\frac{P_{X|Y_1}M(1:|\mathcal{X}|)^{-1}M_{\Omega_{u_i}}[\eta_{u_i}\!-\!(\bm{1}^T \eta_{u_i})M_{\Omega_{u_i}}^{-1}MP_Y]}{\epsilon_i(\bm{1}^T\cdot \eta_{u_i})}.
\end{align*}
Next, we find the linear program using \eqref{kospa} and $\eta_{u_i}$. For the cost function, we have
\begin{align*}
&-\left(\sum_{i=1}^{K} P_U(u_i)b_{u_i}+\epsilon_i P_U(u_i)a_{u_i}J_{u_i}\right)=
-\sum_i b_i(\bm{1}^T\eta_{u_i})\\&-\epsilon_i\sum_i \!\!a_{u_i}\! \left[P_{X|Y_1}M(1:|\mathcal{X}|)^{-1}M_{\Omega_{u_i}}\![\eta_{u_i}\!-\!(\bm{1}^T \eta_{u_i})M_{\Omega_{u_i}}^{-1}MP_Y]\right]\!,
\end{align*}
which is a linear function of elements of $\eta_{u_i}$ for all $u_i\in\mathcal{U}$.
Non-zero elements of the vector $P_U(u_i)V_{\Omega_{u_i}}^*$ equal to the elements of $\eta_{u_i}$, i.e., we have
$$
P_U(u_i)V_{\Omega_{u_i}}^*(\omega_j)=\eta_{u_i}(j).
$$
Thus, the constraint $\sum_{i=1}^{K} P_U(u_i)V_{\Omega_{u_i}}^*=P_Y$ can be rewritten as a linear function of elements of $\eta_{u_i}$. For the constraints $\sum_{i=1}^{K} \epsilon_iP_U(u_i)J_{u_i}=0$, $P_U(u_i)\geq 0,\forall u_i,$ and $\sum_{j=1}^{|\mathcal{X}|}J_{u_i}(x_j)=0$, we have
\begin{align*}
&\sum_iP_{X|Y_1}M(1:|\mathcal{X}|)^{-1}M_{\Omega_{u_i}}\left[\eta_{u_i}-(\bm{1}^T\cdot\eta_{u_i})M_{\Omega_{u_i}}^{-1}MP_Y\right]=0,\\
& \sum_j \eta_{u_i}(j)\geq 0,\ \forall u_i\in\mathcal{U},\\
&\bm{1}^T P_{X|Y_1}M(1:|\mathcal{X}|)^{-1}M_{\Omega_{u_i}}\left[\eta_{u_i}-(\bm{1}^T\cdot\eta_{u_i})M_{\Omega_{u_i}}^{-1}MP_Y\right]=0.
\end{align*}
Furthermore, for the last constraint $\sum_j |J_{u_i}(x_j)|\leq 1$ we have 
\begin{align*}
&\sum_j\! \left|\left(P_{X|Y_1}M(1\!:\!|\mathcal{X}|)^{-1}M_{\Omega_{u_i}}\!\left[\eta_{u_i}\!-\!(\bm{1}^T\!\eta_{u_i})M_{\Omega_{u_i}}^{-1}MP_Y\right]\right)(j)\right|\! \leq\\&\epsilon_{u_i}(\bm{1}^T\eta_{u_i}),\ \forall u_i\in\mathcal{U}.
\end{align*}
The last constraint includes absolute values that can be handled by considering two cases for each absolute value. Thus, all constraints can be rewritten as a linear function of elements of $\eta_{u_i}$ for all $u_i$.
\end{proof}
\section{Discussions}
In this section, we first discuss the hybrid case in which some letters of $U$
satisfy perfect privacy. We then discuss how the framework can be extended to an
arbitrary matrix $P_{X | Y}$.
\subsection{Hybrid case}
In this part, we discuss cases in which some of the privacy budgets are zero, that is, perfect privacy for some letters. To this end, let $\epsilon_1 \geq \cdots \geq \epsilon_t \geq \epsilon_{t+1} = \cdots = \epsilon_K = 0$,
that is, the letters $\{u_{t+1}, \ldots, u_K\}$ satisfy perfect privacy. In this case, for all $u\in\{u_{t+1},\ldots,u_K\}$ the convex polytope in Lemma 2 is given by
\begin{align}
 	\mathbb{S}_{u_i} = \left\{y\in\mathbb{R}^{|\mathcal{Y}|}|My=MP_Y,\ y\geq 0\right\}.
\end{align}
Furthermore, for $u_i\in\{u_{t+1},\ldots,u_K\}$, the extreme points in \eqref{defin1} become
\begin{align*}
V_{\Omega_{u_i}}^*(\omega_j)= \left(M_{\Omega}^{-1}MP_Y\right)(j).
\end{align*}
We note that, for such letters, we do not approximate the entropy $H(P_{Y|U=u_i})$, and we obtain the following result.
\begin{theorem}
Let $\epsilon_1 \geq \cdots \geq \epsilon_t \geq \epsilon_{t+1} = \cdots = \epsilon_K = 0$. For sufficiently small $\epsilon_1$, the minimization problem in \eqref{equi} can be approximated as follows
\begin{align}\label{kospa2}
	&\min_{P_U(.),\{J_{u_i}, u_i\in\mathcal{U}_1\}} -\left(\sum_{i=1}^{K} P_U(u_i)b_{u_i}+\sum_{i=1}^t P_U(u_i)\epsilon_i a_{u_i}J_{u_i}\right)\\\nonumber
	&\text{subject to:}\\\nonumber
	&\sum_{i=1}^{K} P_U(u_i)V_{\Omega_{u_i}}^*=P_Y,\ \sum_{i=1}^{t} P_U(u_i)\epsilon_i J_{u_i}=0,\ P_U\in \mathbb{R}_{+}^{K},\\\nonumber
	&\|J_{u_i}\|_1\leq 1,\  \bm{1}^T\cdot J_{u_i}=0,\ \forall u_i\in\mathcal{U}_1.\nonumber
	\end{align} 
    where $\mathcal{U}_1=\{u_1,\ldots,u_t\}$.
    Furthermore, \eqref{kospa2} can be rewritten as a linear program.
\end{theorem}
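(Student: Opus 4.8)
The plan is to mirror the proof of Theorem~\ref{baghal}, isolating the contribution of the perfect-privacy letters. First I would split the alphabet $\mathcal{U}$ into $\mathcal{U}_1=\{u_1,\ldots,u_t\}$ (the letters with $\epsilon_i>0$) and $\mathcal{U}_0=\{u_{t+1},\ldots,u_K\}$ (the perfect-privacy letters). For $u_i\in\mathcal{U}_0$, Lemma~\ref{lem4} with $\epsilon_i=0$ forces $P_{Y|U=u_i}$ to lie in the polytope $\mathbb{S}_{u_i}=\{y\geq 0: My=MP_Y\}$, which is independent of $u_i$; by Proposition~\ref{4} the conditional-entropy minimizer is an extreme point, so $V_{\Omega_{u_i}}^*(\omega_j)=(M_{\Omega}^{-1}MP_Y)(j)$ and the remaining coordinates vanish, giving a \emph{fixed, $\epsilon$-independent} distribution $P_{Y|U=u_i}$ whose entropy is an exact constant $-b_{u_i}$ with no $o(\epsilon_i)$ remainder. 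Crucially, no perturbation vector $J_{u_i}$ enters for these letters, which is why the objective in \eqref{kospa2} only sums $\epsilon_i a_{u_i}J_{u_i}$ over $i\le t$ and the constraint $\sum_i P_U(u_i)\epsilon_i J_{u_i}=0$ likewise only runs over $i\le t$ (the terms with $i>t$ are multiplied by $\epsilon_i=0$ and drop out).

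Next I would assemble the approximate objective. For $u_i\in\mathcal{U}_1$, apply \eqref{koonkos} exactly as in the proof of Theorem~\ref{baghal}: $H(P_{Y|U=u_i})=-(b_{u_i}+\epsilon_i a_{u_i}J_{u_i})+o(\epsilon_i)$, valid for sufficiently small $\epsilon_1$ since $\epsilon_1\geq\cdots\geq\epsilon_t$. For $u_i\in\mathcal{U}_0$ we use the exact value $H(P_{Y|U=u_i})=-b_{u_i}$. Summing $H(Y|U)=\sum_i P_U(u_i)H(P_{Y|U=u_i})$ and collecting the $o(\epsilon_i)$ terms (there are finitely many, all $o(\epsilon_1)$), the minimization in \eqref{equi} is approximated by \eqref{kospa2}, with the consistency constraint $\sum_i P_U(u_i)V_{\Omega_{u_i}}^*=P_Y$, the marginal constraint $\sum_i P_U(u_i)\epsilon_i J_{u_i}=0$ restricted to $\mathcal{U}_1$ (equivalent to property \eqref{proper2} once the $\epsilon_i=0$ terms are removed), $P_U\in\mathbb{R}_+^K$, and properties \eqref{proper1}, \eqref{proper3} for $u_i\in\mathcal{U}_1$.

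For the linear-program reformulation I would reuse the change of variables from Theorem~\ref{baghal}, but only for the active letters: for $u_i\in\mathcal{U}_1$ set $\eta_{u_i}=P_U(u_i)(M_{\Omega_{u_i}}^{-1}MP_Y+\epsilon_i M_{\Omega_{u_i}}^{-1}M(1:|\mathcal{X}|)P_{X|Y_1}^{-1}J_{u_i})$, and for $u_i\in\mathcal{U}_0$ simply set $\eta_{u_i}=P_U(u_i)\,M_{\Omega_{u_i}}^{-1}MP_Y$ (so $P_U(u_i)=\bm{1}^T\eta_{u_i}$ and there is no $J_{u_i}$ to recover). As in Theorem~\ref{baghal}, $P_U(u_i)=\bm{1}^T\eta_{u_i}$ and, for $i\le t$, $J_{u_i}$ is an affine function of $\eta_{u_i}$; substituting these into the objective and constraints of \eqref{kospa2} makes every term linear in the $\eta_{u_i}$, except the $\ell_1$ constraints $\|J_{u_i}\|_1\leq 1$ for $i\le t$, which are piecewise linear and are linearized by introducing sign cases (or auxiliary variables) per coordinate, exactly as in the proof of Theorem~\ref{baghal}. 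This yields a linear program.

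The main obstacle is a bookkeeping one rather than a conceptual one: one must verify that dropping the perfect-privacy letters from the perturbation machinery is consistent with property \eqref{proper2}. The original constraint $\sum_{i=1}^K \epsilon_i P_U(u_i)J_{u_i}(x)=0$ already has coefficient $\epsilon_i=0$ for $i>t$, so it reduces to $\sum_{i=1}^t \epsilon_i P_U(u_i)J_{u_i}(x)=0$ with no loss; the letters $u_i\in\mathcal{U}_0$ are still genuinely used (they carry mass $P_U(u_i)>0$ and contribute $-b_{u_i}$ to the objective and $V_{\Omega_{u_i}}^*$ to the marginal constraint), they simply do not carry a first-order perturbation. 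The second point requiring a little care is that, for the $\mathcal{U}_0$ letters, we do not invoke the Taylor approximation \eqref{koonkos} at all, so there is no hidden requirement that their leakage be ``small''; this is automatic since their leakage is exactly zero. With these two observations in place, the rest follows verbatim from the arguments for Lemma~\ref{lem4}, Proposition~\ref{4}, and Theorem~\ref{baghal}.
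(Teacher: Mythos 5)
Your proposal is correct and follows essentially the same route as the paper: split $\mathcal{U}$ into the active letters $\mathcal{U}_1$ (where the Taylor approximation \eqref{koonkos} is applied) and the perfect-privacy letters (where the polytope degenerates to $\{y\ge 0: My=MP_Y\}$, the extreme-point entropy is the exact constant $-b_{u_i}$, and no perturbation $J_{u_i}$ is needed), then reuse the $\eta_{u_i}$ change of variables, setting $\eta_{u_i}=P_U(u_i)M_{\Omega_{u_i}}^{-1}MP_Y$ for the zero-leakage letters. Your additional bookkeeping — checking that property \eqref{proper2} collapses to a sum over $i\le t$ and that the zero-leakage letters still carry mass in the marginal constraint — is exactly the content the paper leaves implicit.
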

\begin{proof}
    The proof is similar to that of Theorem~\ref{baghal}. The only difference is that, for
$u_i \in \{u_{t+1}, \ldots, u_K\}$, we do not approximate the entropy
$H(Y | U = u_i)$. Furthermore, for these letters, the entropy equals $b_{u_i}$. For the linear program, we use the same transformation as in Theorem~\ref{baghal}
for $u_i \in \mathcal{U}_1$. For the remaining realizations of $U$, we use
\[
\eta_{u_i} = P_U(u_i)\left(M_{\Omega_{u_i}}^{-1} M P_Y\right).
\]
\end{proof}
\begin{remark}
    In contrast to the invertible case, in the hybrid case the letters with zero
leakage can still attain nonzero weights in the utility term, that is,
$P_U(u_i) > 0$ even when $P_{X \mid U = u_i} = P_X$. For more details, see the
numerical example.
\end{remark}
\subsection{Generalization to arbitrary $P_{X|Y}$}
Here, we discuss how the framework for approximating \eqref{main2} can be
generalized to any $P_{X | Y}$. By checking the proof of \cite[Lemma 2]{Khodam22}, we observe that the convex
polytope $\mathbb{S}_{u_i}$ arises from the following equation:
\begin{align}\label{24}
P_{X | U = u_i} - P_X
= P_{X | Y}\bigl[P_{Y | U = u_i} - P_Y\bigr]
= \epsilon_i J_{u_i}.
\end{align}
When $P_{X \mid Y}$ has full row rank, we can use its invertible submatrix
$P_{X \mid Y_1}$ to obtain a particular solution, and use the remaining part
to describe the component that lies in the null space of $P_{X \mid Y}$. However, for a general $P_{X|Y}$, we can use pseudo inverse of $P_{X|Y}$ for the particular solution instead. In more details, using \eqref{24} we have
\begin{align}\label{25}
    P_{Y|U=u_i}=P_{Y}+\epsilon_i P_{X|Y}^{\dagger}J_{u_i}+z^*,
\end{align}
where $P_{X|Y}^{\dagger}$ is the pseudo inverse of $P_{X|Y}$ and $z^*\in \text{null}(P_{X|Y})$. 
The main difference from the full row-rank case is that here the term
$\epsilon_i P_{X \mid Y}^{\dagger} J_{u_i}$ serves as the particular solution,
instead of
\[
\epsilon_i
\begin{bmatrix}
P_{X \mid Y_1}^{-1} J_{u_i} \\
0
\end{bmatrix}.
\]
Then by using properties of matrix $M$ stated in \cite[Lemma 1]{Khodam22}, we have the following result.
\begin{lemma}\label{lem42}
 	 Let $J_{u_i}$ satisfy the three properties \eqref{proper1}, \eqref{proper2} and \eqref{proper3}. For sufficiently small $\epsilon_1>0$, for every $u_i\in\mathcal{U}$, the vector $P_{Y|U=u_i}$ belongs to the following convex polytope
 	\begin{align*}
 	\mathbb{S}_{u_i} = \left\{y\in\mathbb{R}^{|\mathcal{Y}|}|My=MP_Y+\epsilon_iMP_{X|Y}^{\dagger}J_{u_i},\ y\geq 0\right\}.
 	\end{align*}	
 \end{lemma}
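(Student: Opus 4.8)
The plan is to mimic the proof of \cite[Lemma 2]{Khodam22}, replacing the explicit particular solution built from the invertible submatrix $P_{X|Y_1}$ with the Moore--Penrose pseudoinverse $P_{X|Y}^{\dagger}$, and then verifying that the remaining null-space component is annihilated by $M$. Concretely, I would start from the Markov chain $X-Y-U$, which gives $P_{X|U=u_i}=P_{X|Y}\,P_{Y|U=u_i}$ and $P_X=P_{X|Y}\,P_Y$; subtracting these and using the perturbation representation $P_{X|U=u_i}=P_X+\epsilon_i J_{u_i}$ yields exactly \eqref{24}, i.e., $P_{X|Y}\bigl(P_{Y|U=u_i}-P_Y\bigr)=\epsilon_i J_{u_i}$. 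In particular the right-hand side lies in $\mathrm{range}(P_{X|Y})$, so the linear system $P_{X|Y}z=\epsilon_i J_{u_i}$ in the unknown $z=P_{Y|U=u_i}-P_Y$ is consistent.

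Next I would invoke the standard description of the solution set of a consistent linear system: every solution has the form $z=\epsilon_i P_{X|Y}^{\dagger}J_{u_i}+z^{*}$ with $z^{*}\in\mathrm{null}(P_{X|Y})$, since $P_{X|Y}^{\dagger}J_{u_i}$ is the (minimum-norm) particular solution. Rearranging gives \eqref{25}. I would then left-multiply \eqref{25} by $M$ and use the key property of $M$ from \cite[Lemma 1]{Khodam22} — namely that the rows of $M$ span the orthogonal complement of $\mathrm{null}(P_{X|Y})$, equivalently $\mathrm{null}(M)=\mathrm{null}(P_{X|Y})$ — to conclude $Mz^{*}=0$, hence $MP_{Y|U=u_i}=MP_Y+\epsilon_i MP_{X|Y}^{\dagger}J_{u_i}$. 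Since $P_{Y|U=u_i}$ is a conditional probability vector it is automatically nonnegative, so $P_{Y|U=u_i}\in\mathbb{S}_{u_i}$, which is the claim. The smallness of $\epsilon_1$ (hence of each $\epsilon_i$, by the ordering $\epsilon_1\geq\cdots\geq\epsilon_K$) plays the same role as in \cite[Lemma 2]{Khodam22}: together with $\|J_{u_i}\|_1\le 1$ and a full-support $P_Y$, it guarantees that the perturbed vectors and the associated polytopes are well positioned near the probability simplex, so the later steps (extreme-point search and entropy approximation) remain valid; the membership statement itself is an exact affine identity and needs no approximation.

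The main obstacle I anticipate is the definition of $M$ when $P_{X|Y}$ does not have full row rank. If $M$ is formed from the first $|\mathcal{X}|$ right singular vectors as in \eqref{M} while $\mathrm{rank}(P_{X|Y})=r<|\mathcal{X}|$, then $M$ also contains singular vectors associated with zero singular values, so $\mathrm{null}(M)$ is a \emph{proper} subspace of $\mathrm{null}(P_{X|Y})$ and the step $Mz^{*}=0$ fails. To fix this I would redefine $M\triangleq[v_1,\dots,v_r]^{T}$, keeping exactly the right singular vectors with nonzero singular value; then $\mathrm{null}(M)=\mathrm{null}(P_{X|Y})$, the orthogonality property of \cite[Lemma 1]{Khodam22} holds verbatim, and the polytope $\mathbb{S}_{u_i}$ — now described by $r$ affine constraints — still contains $P_{Y|U=u_i}$ by the argument above. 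A secondary point to check is that $P_{X|Y}^{\dagger}J_{u_i}$ is a genuine solution rather than merely a least-squares approximant; this is precisely where consistency of the system, established in the first paragraph, is used, since $P_{X|Y}P_{X|Y}^{\dagger}$ acts as the identity on $\mathrm{range}(P_{X|Y})\ni\epsilon_i J_{u_i}$.
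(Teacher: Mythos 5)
Your proof follows the same route as the paper: derive $P_{X|Y}\bigl(P_{Y|U=u_i}-P_Y\bigr)=\epsilon_i J_{u_i}$ from the Markov chain, write the general solution as the pseudoinverse particular solution plus a null-space term $z^*$, and kill $z^*$ by applying $M$ via the null-space property from \cite[Lemma 1]{Khodam22}. The argument is correct, and your explicit consistency check (that $\epsilon_i J_{u_i}\in\mathrm{range}(P_{X|Y})$ so that $P_{X|Y}^{\dagger}J_{u_i}$ is a true solution rather than a least-squares approximant) is exactly the point the paper defers to its later remark. Your observation about the rank-deficient case is a genuine catch that the paper's one-line proof does not address: with $M$ defined as in \eqref{M} from the first $|\mathcal{X}|$ right singular vectors, $\mathrm{null}(M)$ is strictly smaller than $\mathrm{null}(P_{X|Y})$ when $\mathrm{rank}(P_{X|Y})=r<|\mathcal{X}|$, so $Mz^*=0$ fails and the stated affine identity need not hold; your fix of truncating $M$ to the $r$ singular vectors with nonzero singular value restores $\mathrm{null}(M)=\mathrm{null}(P_{X|Y})$ and makes the lemma go through. (Note also that the paper's subsequent use of $P_{X|Y}P_{X|Y}^{\dagger}=I$ quietly reinstates the full-row-rank assumption, which is consistent with your diagnosis.)
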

Following the same procedure as discussed before the extreme points of $\mathbb{S}_{u_i}$ are given as
\begin{align}\label{defin11}
V_{\Omega_{u_i}}^*(\omega_j)= \left(M_{\Omega_{u_i}}^{-1}MP_Y+\epsilon_i M_{\Omega_{u_i}}^{-1}MP_{X|Y}^{\dagger}J_{u_i}\right)(j).
\end{align}
Furthermore, the entropy of $V_{\Omega}^*(\omega_i)$ can be approximated as follows 
\begin{align}
	H(P_{Y|U=u}) \simeq-(b_{u_i}+\epsilon_i a_{u_i}'J_{u_i}),\label{koonkos2}
	\end{align}
	with $$b_{u_i} = l_{u_i} \left(M_{\Omega_{u_i}}^{-1}MP_Y\right),$$ 
	$$a_{u_i}' = l_{u_i}\left(M_{\Omega_{u_i}}^{-1}MP_{X|Y}^{\dagger}\right)\in\mathbb{R}^{1\times|\mathcal{X}|},$$
	$$l_{u_i} = \left[\log\left(M_{\Omega_{u_i}}^{-1}MP_{Y}(j)\right)\right]_{j=1:|\mathcal{X}|}\in\mathbb{R}^{1\times|\mathcal{X}|}.$$
Note that, in the proof of approximating the extreme points in
\cite[Lemma 4]{Khodam22}, we use the the fact that
\[
1^{T} M_{\Omega_{u_i}}^{-1} M(1\!:\!|\mathcal{X}|) P_{X \mid Y_1}^{-1} = 0 .
\]
Here, the extension to the pseudo inverse also holds. Specifically, we need to
show that
\[
1^{T} M_{\Omega_{u_i}}^{-1} M P_{X \mid Y}^{\dagger} = 0 .
\]
Using the same arguments as in \cite[Lemma 4]{Khodam22}, it suffices to show that
\[
1^{T} P_{X \mid Y}^{\dagger} = 1^{T} .
\]
This follows from
\[
1^{T} = 1^{T} I = 1^{T} P_{X \mid Y} P_{X \mid Y}^{\dagger}
= 1^{T} P_{X \mid Y}^{\dagger},
\]
where we used $P_{X \mid Y} P_{X \mid Y}^{\dagger} = I$ and
$1^{T} P_{X \mid Y} = 1^{T}$. For further details, see
\cite[Lemma 4]{Khodam22}.
Hence, we extend Theorem \ref{baghal} as follows.
\begin{theorem}\label{baghal5}
	For any leakage matrix $P_{X|Y}$ and for sufficiently small $\epsilon_1$, the minimization problem in \eqref{equi} can be approximated as follows
	\begin{align}\label{kospa22}
	&\min_{P_U(.),\{J_{u_i}, u_i\in\mathcal{U}\}} -\left(\sum_{i=1}^{K} P_U(u_i)(b_{u_i}+\epsilon_i a_{u_i}'J_{u_i})\right)\\\nonumber
	&\text{subject to:}\\\nonumber
	&\sum_{i=1}^{K} P_U(u_i)V_{\Omega_{u_i}}^*=P_Y,\ \sum_{i=1}^{K} P_U(u_i)\epsilon_i J_{u_i}=0,\ P_U\in \mathbb{R}_{+}^{K},\\\nonumber
	&\|J_{u_i}\|_1\leq 1,\  \bm{1}^T\cdot J_{u_i}=0,\ \forall u_i\in\mathcal{U}.\nonumber
	\end{align} 
    Furthermore, \eqref{kospa22} can be rewritten as a linear program.
    \end{theorem}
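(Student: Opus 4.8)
The plan is to mirror the proof of Theorem~\ref{baghal} almost verbatim, the only structural changes being: (i) replacing the polytope $\mathbb{S}_{u_i}$ of Lemma~\ref{lem4} by the one in Lemma~\ref{lem42}; (ii) replacing the extreme-point formula \eqref{defin1} by \eqref{defin11}; and (iii) replacing the entropy approximation \eqref{koonkos} by \eqref{koonkos2}. First I would note that the equivalency \eqref{equi} and Proposition~\ref{4} carry over unchanged: \eqref{equi} uses only the Markov chain $X-Y-U$ and the null-space structure of $M$ from \cite[Lemma~1]{Khodam22}, while Proposition~\ref{4} uses only concavity of entropy together with the fact that each $P_{Y|U=u_i}$ ranges over a convex polytope; both remain valid when that polytope is $\mathbb{S}_{u_i}$ of Lemma~\ref{lem42}. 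Hence the minimum of $H(Y|U)$ in \eqref{equi} is attained with each $P_{Y|U=u_i}$ at an extreme point $V_{\Omega_{u_i}}^*$ given by \eqref{defin11}.

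Second, I would substitute the entropy approximation \eqref{koonkos2} into $H(Y|U)=\sum_i P_U(u_i)H(P_{Y|U=u_i})$ and collect the constraints: $\sum_i P_U(u_i)V_{\Omega_{u_i}}^*=P_Y$ from marginalization, $\sum_i P_U(u_i)\epsilon_i J_{u_i}=0$ from \eqref{proper2}, and $\|J_{u_i}\|_1\leq 1$, $\bm{1}^T J_{u_i}=0$ per letter from \eqref{proper1} and \eqref{proper3}. This yields \eqref{kospa22}. The only place where the generalization is not purely cosmetic is the justification of \eqref{koonkos2} itself, which requires $\bm{1}^T M_{\Omega_{u_i}}^{-1} M P_{X|Y}^{\dagger}=0$; as the excerpt records, this reduces to $\bm{1}^T P_{X|Y}^{\dagger}=\bm{1}^T$, which follows from $\bm{1}^T P_{X|Y}=\bm{1}^T$ together with $P_{X|Y}P_{X|Y}^{\dagger}=I$. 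When $P_{X|Y}$ lacks full row rank, this identity should be read as applied to the (necessarily consistent) system $P_{X|Y}v=\epsilon_i J_{u_i}$, so that $P_{X|Y}^{\dagger}$ returns a valid particular solution and the additive null-space term $z^*$ in \eqref{25} absorbs the remaining degrees of freedom.

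Third, for the linear-program reformulation I would repeat the change of variables from Theorem~\ref{baghal}: set $\eta_{u_i}=P_U(u_i)\bigl(M_{\Omega_{u_i}}^{-1}MP_Y+\epsilon_i M_{\Omega_{u_i}}^{-1}MP_{X|Y}^{\dagger}J_{u_i}\bigr)\in\mathbb{R}^{|\mathcal{X}|}$, recover $P_U(u_i)=\bm{1}^T\eta_{u_i}$ (using $\bm{1}^T M_{\Omega_{u_i}}^{-1}MP_Y=1$ and $\bm{1}^T M_{\Omega_{u_i}}^{-1}MP_{X|Y}^{\dagger}=0$) and then $J_{u_i}$ from the resulting affine relation, substitute into the objective and into every constraint of \eqref{kospa22}, and finally split each $\|J_{u_i}\|_1\leq 1$ into the finitely many linear inequalities obtained by fixing the sign pattern of the entries of $J_{u_i}$ (equivalently, by introducing auxiliary variables dominating $\pm J_{u_i}(x_j)$). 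Since every resulting relation is affine in the $\eta_{u_i}$, \eqref{kospa22} becomes a linear program.

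The step I expect to require the most care — essentially the only non-transcriptional one — is the pseudoinverse bookkeeping: verifying that $\bm{1}^T P_{X|Y}^{\dagger}=\bm{1}^T$ and the recovery of $J_{u_i}$ from $\eta_{u_i}$ remain meaningful when $P_{X|Y}$ is only assumed to yield a consistent system rather than to have full row rank, since $P_{X|Y}P_{X|Y}^{\dagger}$ is then merely the orthogonal projector onto the column space of $P_{X|Y}$. Everything else is a line-by-line adaptation of the full-row-rank argument underlying Theorem~\ref{baghal}.
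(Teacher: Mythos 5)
Your proposal is correct and follows essentially the same route as the paper, which likewise proves this theorem by transcribing the argument of Theorem~\ref{baghal} with \eqref{defin11} in place of \eqref{defin1} and $a_{u_i}'$ in place of $a_{u_i}$, and verifies $\bm{1}^{T}P_{X|Y}^{\dagger}=\bm{1}^{T}$ exactly as you do. Your extra care about the case where $P_{X|Y}P_{X|Y}^{\dagger}$ is only a projector matches the paper's own remark that inconsistent systems are detected a posteriori by checking \eqref{24}.
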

\begin{proof} The proof follows similar arguments as those in Theorem 3. The only difference is that we use \eqref{defin11} instead of \eqref{defin1}, and we use $a_{u_i}'$ instead of $a_{u_i}$.\end{proof} 
\begin{remark}
We emphasize that, in \eqref{25}, we use the general solution to $Ax = b$.
However, there are cases in which this system has no solution. For instance,
if $b$ does not lie in the column space of $A$, then there exists no $x$
satisfying $Ax = b$. In such cases, $A^{\dagger} b$ minimizes $\|Ax - b\|$.

To address this issue, we follow the procedure described above and solve the
corresponding linear program. After obtaining $J_{u_i}$, we check whether
\eqref{24} is satisfied. If \eqref{24} is not satisfied, this indicates that
the feasible set is empty and that nonzero utility cannot be achieved.

This situation can occur when $P_{X \mid Y}$ is tall, that is, when
$|\mathcal{X}| > |\mathcal{Y}|$. In practice, however, we typically have
$|\mathcal{X}| < |\mathcal{Y}|$. The framework developed in this paper applies
to cases in which the feasible set is non-empty. In particular, we ignore cases
for which $\eqref{main2} = 0$.
\end{remark}
\subsection{Extending the permissible leakage range}
In this section, we discuss how our method can be applied to larger permissible leakage intervals. In this paper, to approximate $I(U;Y)$, we use a first-order Taylor expansion of $\log(1+x)$ at the extreme points of the convex polytopes $\mathbb{S}_{u_i}$. One way to increase the permissible leakage interval is to employ higher-order approximations of $\log(1+x)$. This follows because higher-order Taylor expansions allow the same approximation error to be achieved over a larger range of $x$. Specifically, if the first-order approximation is valid over the interval $x \leq c_1$, then a higher-order approximation can be valid over an interval $x \leq c_2$, where $c_2 > c_1$, while maintaining the same approximation error. For instance, let $$\log(1+x)=x-\frac{x^2}{2}+\text{error}_1(x),$$ and $$\log(1+x)=x-\frac{x^2}{2}+\frac{x^3}{3}+\text{error}_2(x).$$ If for the first approximation we use the interval $x\leq c_1$, then we can use the interval $x\leq c_2$ where $c_2>c_1$, for the second approximation to have the same error, i.e., $$|\text{error}_1(x)|=|\text{error}_2(x)|.$$ Furthermore, we obtain tighter bounds compared to the current results (closer to the optimal value); however, the optimization problems become more complex.
\section{Numerical Example}
Here, we use \cite[Example 2]{Khodam22} and compare our design under multi-level privacy constraints with it. We study a hybrid case as follows.
Let $P_{X|Y}=\begin{bmatrix}
	0.3  \ 0.8 \ 0.5 \ 0.4\\0.7 \ 0.2 \ 0.5 \ 0.6
	\end{bmatrix}$  
	and $P_Y=[\frac{1}{2},\ \frac{1}{4},\ \frac{1}{8},\ \frac{1}{8}]^T$.
    Furthermore, let $\epsilon_1=\epsilon_2=\epsilon_3=0.01$ and $\epsilon_4=0$. 
    Let $J_{u_i}=\begin{bmatrix}
J_{u_i}^1\\J_{u_i}^2
\end{bmatrix}$ for $u_i\in\{u_1,u_2,u_3\}$. Using \eqref{proper1} we obtain $J_{u_i}^1+J_{u_i}^2=0$. Thus, we show $J_{u_i}$ by $\begin{bmatrix}
-J_{u_i}^2\\J_{u_i}^2
\end{bmatrix}$.
    For $u_i\in\{u_1,u_2,u_3\}$, the extreme points of $\mathbb{S}_{u_i}$ are as follows
    \begin{align*}
	&V_{1_{\Omega_{u_i}}}^* = \begin{bmatrix}
	0.675-2\epsilon_iJ_{u_i}^2\\0.325+2\epsilon_iJ_{u_i}^2\\0\\0
	\end{bmatrix},\
	V_{2_{\Omega_{u_i}}}^* = \begin{bmatrix}
	0.1875+5\epsilon_iJ_{u_i}^2\\0\\0.8125-5\epsilon_iJ_{u_i}^2\\0
	\end{bmatrix}\\
	& V_{3_{\Omega_{u_i}}}^* = \begin{bmatrix}
	0\\0.1563- 2.5\epsilon_iJ_{u_i}^2\\0\\0.8437+ 2.5\epsilon_iJ_{u_i}^2
	\end{bmatrix},\
	V_{4_{\Omega_{u_i}}}^* = \begin{bmatrix}
	0\\0\\0.6251- 10\epsilon_iJ_{u_i}^2\\0.3749+ 10\epsilon_iJ_{u_i}^2
	\end{bmatrix},
	\end{align*}
    where $\epsilon_i=0.01$. The set of extreme points for $u_i=u_4$ is as follows
    \begin{align*}
	&V_{1_{\Omega_{u_4}}}^* = \begin{bmatrix}
	0.675\\0.325\\0\\0
	\end{bmatrix},\
	V_{2_{\Omega_{u_4}}}^* = \begin{bmatrix}
	0.1875\\0\\0.8125\\0
	\end{bmatrix}\\
	& V_{3_{\Omega_{u_4}}}^* = \begin{bmatrix}
	0\\0.1563\\0\\0.8437
	\end{bmatrix},\
	V_{4_{\Omega_{u_4}}}^* = \begin{bmatrix}
	0\\0\\0.6251\\0.3749
	\end{bmatrix}.
	\end{align*}
    After examining all possible combinations of extreme points most of which are
infeasible, we select the fourth extreme point of $\mathbb{S}_{u_1}$, the second of
$\mathbb{S}_{u_2}$, the third of $\mathbb{S}_{u_3}$, and the first of $\mathbb{S}_{u_4}$.
For simplicity of notation, we show $J_{u_i}$ and $P_{U}(u_i)$ by $J_i$ and $P_i$, respectively.
 Thus, we have the following problem
\begin{align*}
\min\ &P_10.9544+P_20.6962+P_30.6254+P_40.9097\\+&P_1\epsilon J_1^2 7.3747+P_2\epsilon J_2^2 10.5816-P_3 \epsilon J_3^2 6.0808 \\
&\text{s.t.}\begin{bmatrix}
\frac{1}{2}\\\frac{1}{4}\\ \frac{1}{8}\\ \frac{1}{8}
\end{bmatrix} = P_1 \begin{bmatrix}
0\\0\\0.6251-10\epsilon J_1^2\\0.3749+10\epsilon J_1^2
\end{bmatrix}+ P_2\begin{bmatrix}
0.1875+5\epsilon J_2^2\\0\\0.8125-5\epsilon J_2^2\\0
\end{bmatrix}\\&+P_3\begin{bmatrix}
0\\0.1563-2.5\epsilon J_3^2\\0\\0.8437+2.5\epsilon J_3^2
\end{bmatrix}+P_4\begin{bmatrix}
0.675\\0.325\\0\\0
\end{bmatrix},\\
&P_1J_1^2+P_2J_2^2+P_3J_3^2=0,\ P_1,P_2,P_3,P_4\geq 0,\\
&|J_1^2|\leq \frac{1}{2},\ |J_2^2|\leq \frac{1}{2},\ |J_3^2|\leq \frac{1}{2},
\end{align*}  
where the minimization is over $P_U(u)$ and $J_u^2$, and $\epsilon=0.01$. Next, we convert the problem to a linear program. We have
\begin{align*}
\eta_1&= \begin{bmatrix}
0.6251P_1-10\epsilon P_1J_1^2\\0.3749P_1+10 \epsilon P_1J_1^2
\end{bmatrix}=\begin{bmatrix}
\eta_1^1\\\eta_1^2
\end{bmatrix},\\
\eta_2&=\begin{bmatrix}
0.1875P_2+ 5 \epsilon P_2J_2^2\\0.8125P_2- 5 \epsilon P_2J_2^2
\end{bmatrix}=\begin{bmatrix}
\eta_2^1\\\eta_2^2
\end{bmatrix},\\
\eta_3&= \begin{bmatrix}
0.1563P_3- 2.5 \epsilon P_3J_3^2\\0.8437P_3+ 2.5\epsilon P_3J_3^2
\end{bmatrix}=\begin{bmatrix}
\eta_3^1\\\eta_3^2
\end{bmatrix},\\
\eta_4&=P_4.
\end{align*}
The linear program is obtained as
\begin{align*}
\min\ &0.6779\eta_1^1+1.4155\eta_1^2+2.415\eta_2^1+0.2995\eta_2^2\\
&2.6776\eta_3^1+0.2452\eta_3^2+0.9097 \eta_4\\
&\text{s.t.}\begin{bmatrix}
\frac{1}{2}\\\frac{1}{4}\\ \frac{1}{8}\\ \frac{1}{8}
\end{bmatrix} = \begin{bmatrix}
0.675\eta_4+\eta_2^1\\0.325\eta_4+\eta_3^1\\\eta_2^2+\eta_1^1\\\eta_3^2+\eta_1^2
\end{bmatrix},\ \begin{cases}
\eta_1^1+\eta_1^2&\geq 0\\
\eta_2^1+\eta_2^2&\geq 0\\
\eta_3^1+\eta_3^2&\geq 0\\
\eta_4&\geq 0
\end{cases}\\
&\frac{0.6251\eta_1^2-0.3749\eta_1^1}{10}+\frac{0.8125\eta_2^1-0.1875\eta_2^2}{5}+\\
&\frac{0.1563\eta_3^2-0.8437\eta_3^1}{2.5}=0,\\
&\frac{|0.6251\eta_1^2-0.3749\eta_1^1|}{\eta_1^1+\eta_1^2}\!\leq 5\epsilon,\\ &\frac{|0.8125\eta_2^1-0.1875\eta_2^2|}{\eta_2^1+\eta_2^2}\!\leq 2.5\epsilon,\\
&\frac{|0.1563\eta_3^2-0.8437\eta_3^1|}{\eta_3^1+\eta_3^2}\leq 1.125\epsilon.
\end{align*}
This leads to $$P_U = \begin{bmatrix}
0 \\ 0.1488 \\ 0.143 \\ 0.7082
\end{bmatrix},$$
and we obtain $I(U;Y)\cong 0.9109$, which is less than the utility attained in \cite{Khodam22} and greater than that of the perfect-privacy case in \cite{borz}.
 This follows since, in this example, we force $u_4$ to satisfy perfect privacy. Furthermore, in contrast to the case of an invertible matrix $P_{X|Y}$, here, although $\epsilon_4 = 0$, we obtain a nonzero weight for $u_4$, i.e., $P_U(u_4) = 0.7082 > 0$. This justifies Remark~\ref{r4}.
\section{Conclusion}
In this paper, we studied an information-theoretic privacy mechanism design under different point-wise leakage constraints. We introduced the notion of multi-level point-wise leakage which generalizes existing point-wise leakage measures by allowing different leakage budgets for different disclosed outputs and also includes mixtures of perfect privacy and non-zero leakage. For small leakage regimes, we utilized information-geometric approximations to obtain tractable formulations in closed form of the privacy–utility trade-off. In particular, we showed that when the leakage matrix is invertible, the resulting problem follows a quadratic formulation with closed-form solutions, and that binary disclosed data is sufficient to achieve optimal utility. We further extended the approach to general leakage matrices via linear programming approximations and discussed methods for extending the permissible leakage range. These results provide a flexible and computationally efficient framework for privacy mechanism design with multi-level leakage control.
Finally, motivated by our model, several other directions can be considered. One interesting problem is to define a leakage measure over the alphabet of $X$, i.e., $\mathcal{L}(x_i \rightarrow U) \leq \epsilon_i$. A similar approach can be used in this case to obtain closed-form solutions. Furthermore, we may also consider the case in which the leakage from each symbol in the alphabet of $X$ to each symbol in the alphabet of $U$ is bounded, i.e., $\mathcal{L}(x_i \rightarrow u_j) \leq \epsilon_{i,j}$. In this case, we can consider measures such as LDP, LIP, and lift. We leave these directions for future work.

\clearpage   
\bibliographystyle{IEEEtran}
\bibliography{IEEEabrv,IZS}

\end{document}